\newtheorem{defin}{Definition}[section]
\newtheorem{Def}[defin]{Definition}
\newtheorem{Lemma}[defin]{Lemma}
\newtheorem{Thm}[defin]{Theorem}
\newtheorem{Cor}[defin]{Corollary}
\newtheorem{Rem}[defin]{Remark}
\newcommand{\qbinom}[2]{\genfrac{[}{]}{0pt}{}{#1}{#2}}
\newcommand{\F}{\mathbb{F}}
\newcommand{\R}{\mathbb{R}}
\title[Running heading with forty characters or less]
      {Sample \LaTeX\ File for AMC}
\title[Bounds for projective codes]{Bounds for projective codes from semidefinite programming}
\author[Christine Bachoc, Alberto Passuello, Frank Vallentin]{}
\subjclass{94B65, 90C22}
 \keywords{Projective codes, semidefinite programming, bounds}
\thanks{The third author was supported by Vidi grant 639.032.917 from the Netherlands Organization for Scientific Research (NWO)}
\begin{document}
\maketitle

\centerline{\scshape Christine Bachoc and Alberto Passuello}
\medskip
{\footnotesize
 \centerline{Univ. Bordeaux}
  \centerline{Institut de Math\'ematiques}
   \centerline{351, cours de la Lib\'eration}
   \centerline{F-33400 Talence, France}
   }

\medskip

\centerline{\scshape Frank Vallentin}
\medskip
{\footnotesize
 \centerline{Mathematisches Institut}
  \centerline{Universit\"at zu K\"oln}
   \centerline{Weyertal 86-90, 50931 K\"oln, Germany}
   }

\begin{abstract}
We apply the semidefinite programming method to derive bounds for
projective codes over a finite field.
\end{abstract}

\section{Introduction}  

In network coding theory, as introduced in \cite{Ahlswede},
information is transmitted through a directed graph. In general this
graph has several sources, several receivers, and a certain number of
intermediate nodes. Information is modeled as vectors of fixed length
over a finite field $\mathbb{F}_q$, called \emph{packets}. To improve
the performance of the communication, intermediate nodes should
forward random linear $\mathbb{F}_q$-combinations of the packets they
receive. This is the approach taken in the non-coherent communication
case, that is, when the structure of the network is not known a priori
\cite{benefits}. Hence, the \emph{vector space} spanned by the packets
injected at the source is globally preserved in the network when no
error occurs. This observation led Koetter and Kschischang
\cite{Koetter-Kschischang} to model network codes as subsets of
projective space $\mathcal{P}(\mathbb{F}_q^n)$, the set of linear
subspaces of $\mathbb{F}_q^n$, or of Grassmann space
$\mathcal{G}_q(n,k)$, the subset of those subspaces of
$\mathbb{F}_q^n$ having dimension~$k$.  Subsets of
$\mathcal{P}(\mathbb{F}_q^n)$ are called \emph{projective codes} while
subsets of the Grassmann space will be referred to as
\emph{constant-dimension codes} or \emph{Grassmann codes}.

As usual in coding theory, in order to protect the system from  errors, it is
desirable to select the elements of the code so that they are pairwise
as far as possible with respect to a suitable distance. The
\emph{subspace distance} between~$U$ and~$V$
$$d_S(U,V)=\dim (U+V) -\dim (U\cap V)=\dim U +\dim V -2\dim (U\cap
V)$$
was introduced in \cite{Koetter-Kschischang} for this purpose. It is
natural to ask how large a code with a given minimal distance can
be. Formally, we define
\begin{equation*}
\begin{cases}
A_q(n,d):=\ \max\{|\mathcal{C}|\ :\ \mathcal{C}\subset \mathcal{P}(\mathbb{F}_q^n),\ d_S(\mathcal{C})\geq d\}\\
A_q(n,k, 2\delta):=\ \max\{|\mathcal{C}|\ :\ \mathcal{C}\subset \mathcal{G}_q(n,k),\ d_S(\mathcal{C})\geq 2\delta\}
\end{cases}
\end{equation*}
where $d_S(\mathcal C)$ denotes the minimal subspace distance
among distinct elements of a code $\mathcal C$.
In this paper we will discuss and prove upper bounds for $A_q(n,d)$
and $A_q(n,k, 2\delta)$.

\subsection{Bounds for $A_q(n,k,2\delta)$}

Grassmann space $\mathcal{G}_q(n,k)$ is a homogeneous space under the
action of the linear group $GL_n(\F_q)$. Moreover, the group acts
\emph{distance transitively} when we use the subspace distance; the
orbits of $GL_n(\F_q)$ acting on pairs~$(U,V)$ of $\mathcal{G}_q(n,k)$
are characterized by the subspace distance~$d(U,V)$.  In other words,
Grassmann space is \emph{two-point homogeneous} under this action.

Due to this property, codes and designs in~$\mathcal{G}_q(n,k)$ can be
analyzed in the framework of Delsarte's theory, in the same way as
other classical spaces in coding theory, such as Hamming space and
binary Johnson space. In fact, $\mathcal{G}_q(n,k)$ is a
\emph{$q$-analog} of binary Johnson space;
see~\cite{Delsarte-Hahn}. The linear group plays the role of the
symmetric group for the Johnson space, while the dimension replaces
the weight function.

The classical bounds (anticode, Hamming, Johnson, Singleton) have been
derived for the Grassmann codes \cite{Koetter-Kschischang, WXSN,
  Xia-Fu}. The more sophisticated Delsarte linear programming bound
was obtained in \cite{Delsarte-Hahn}. However, numerical computations
indicate that it is not better than the anticode bound. Moreover, the
Singleton and anticode bounds have the same asymptotic behavior which
is attained by a family of Reed-Solomon-like codes constructed in
\cite{Koetter-Kschischang} and closely related to the rank-metric
Gabidulin codes.

\subsection{Bounds for $A_q(n,d)$}

In contrast to $\mathcal{G}_q(n,k)$, the projective space has a much
nastier behavior, essentially because it is not two-point
homogeneous. In fact it is not even homogeneous under the action of a
group. For example, the size of balls in this space depends not only
on their radius, but also on the dimension of their
center. Consequently, bounds for projective codes are much harder to
obtain.  Etzion and Vardy in \cite{Etzion-Vardy} provide a bound in
the form of the optimal value of a linear program, which is derived by
elementary reasoning involving packing issues. Up to now the Etzion-Vardy bound
is the only successful generalization of the classical bounds to
projective space.

\smallskip

\emph{In this paper we derive semidefinite programming bounds for
projective codes and compare them with the above mentioned bounds.}

\smallskip

In convex optimization, semidefinite programs generalize linear
programs and one can solve them by efficient algorithms \cite{Todd},
\cite{SDP}. They have numerous applications in combinatorial
optimization. The earliest is due to
Lov\'asz~\cite{Lovasz-Shannoncapacity} who found a semidefinite
programming upper bound, the theta number, for the independence number
of a graph.

Because a code with given minimal distance can be viewed as an
independent set in a certain graph, the theta number also applies to
coding theory. However, because the underlying graph is built on the
space under consideration, its size grows exponentially with the
parameters of the codes. So by itself the theta number is not an
appropriate tool, unless the symmetries of the space are taken into
account. A general framework for symmetry reduction techniques of
semidefinite programs is provided in \cite{inv-sdp}.  For the
classical spaces of coding theory, after symmetry reduction, the theta
number turns out to be essentially equal to the celebrated Delsarte
linear programming bound. For projective spaces, the symmetry reduction
was announced in \cite{Bachoc-Vallentin2} (see also
\cite{Bachoc-ITW}).  The program remains a semidefinite program (it
does not collapse to a linear program) but fortunately it has
polynomial size in the dimension $n$.

The relationship between Delsarte's linear programming bound and the
theta number was recognized long ago in \cite{Schrijver-comparaison}
and \cite{McEliece-Lovasz}. Recently, more applications of
semidefinite programming to coding theory have been developed, see
\cite{Schrijver-SDP}, \cite{Bachoc-Vallentin-kissingnumber},
\cite{Vallentin-symmetry}, \cite{quadruple-distances} and the survey
\cite{Bachoc-ITW}.

\subsection{Organization of the paper}

In Section~\ref{sec 2} we review the classical bounds for Grassmann codes and
the Etzion-Vardy bound for projective codes. In Section~\ref{sec 3} we present
the semidefinite programming method in connection with the theta
number. We show that most of the bounds for Grassmann codes can be
derived from this method. In Section~\ref{sec 4} we reduce the semidefinite
program by the action of the group $GL_n(\F_q)$. In Section~\ref{sec 5} we
present numerical results obtained with this method and we compare
them with the Etzion-Vardy method for $q = 2$ and $n\leq 16$.  Another
distance of interest on projective space, the \emph{injection
  distance}, was introduced in~\cite{network-metrics}. We show how to
modify the Etzion-Vardy bound as well as the semidefinite programming
bound for this.

\section{Elementary bounds for Grassmann and projective
  codes}
\label{sec 2}

\subsection{Bounds for Grassmann codes}

In this section we review the classical bounds for $A_q(n,k,2\delta)$.
We note that the subspace distance takes only even values on the
Grassmann space and that one can restrict to $k\leq n/2$ by the
relation $A_q(n,k, 2\delta) = A_q(n,n-k, 2\delta)$, which follows by
considering orthogonal subspaces.

We recall the definition of the $q$-analog of the binomial coefficient
that counts the number of $k$-dimensional subspaces of a fixed
$n$-dimensional space over $\mathbb{F}_q$, i.e.~the number of elements
of $\mathcal{G}_q(n,k)$.

\begin{Def}
The \emph{$q$-ary binomial coefficient} is defined by
$$
\qbinom{n}{k}_q = \frac{(q^n-1)\dots(q^{n-k+1}-1)}{(q^k-1)\dots (q-1)}.
$$
\end{Def}

\subsubsection{Sphere-packing bound} 

\begin{equation}
\label{sphere-packing-bound}
A_q(n,k,2\delta)\leq \frac{|\mathcal{G}_q(n,k)|}{|B_k(\delta-1)|}=\frac{\qbinom{n}{k}_q}
{\sum_{m=0}^{\lfloor(\delta-1)/2\rfloor}\qbinom{k}{m}_q\qbinom{n-k}{m}_qq^{m^2}}
\end{equation}
It follows from the well-known observation that balls of radius
$\delta-1$ centered at elements of a code $\mathcal{C}\subset
\mathcal{G}_q(n,k)$ with minimal distance $2\delta$ are pairwise
disjoint and have the same cardinality
$\sum_{m=0}^{\lfloor(\delta-1)/2\rfloor}\qbinom{k}{m}_q\qbinom{n-k}{m}_qq^{m^2}$.

\subsubsection{Singleton bound  \cite{Koetter-Kschischang}}

\begin{equation}\label{singleton-bound}
A_q(n,k,2\delta)\leq 
\qbinom{n-\delta+1}{k-\delta+1}_q
\end{equation}
It is obtained by the introduction of a ``puncturing'' operation on
the code.

\subsubsection{Anticode bound  \cite{WXSN}} 

An \emph{anticode} of diameter $e$ is a subset of a metric space whose
pairwise distinct elements are at distance less or equal than $e$.
The general anticode bound (see~\cite{Delsarte-these}) states that, given
a metric space $X$ which is homogeneous under the action of a group
$G$, for every code $\mathcal{C}\subset X$ with minimal distance $d$
and for every anticode ${\mathcal A}$  of diameter $d-1$, we
have
$$
|\mathcal{C}|\leq \frac{|X|}{|\mathcal{A}|}\ .
$$ 
Spheres of given radius $r$ are anticodes of diameter
$2r$. So if we take $\mathcal{A}$ to be a sphere of radius $\delta-1$
in $\mathcal{G}_q(n,k)$ we recover the sphere-packing
bound. Obviously, to obtain the strongest bound, we have to choose the
largest anticodes of given diameter, which in our case are not
spheres. Indeed, the set of all elements of
$\mathcal{G}_q(n,k)$ which contain a fixed $(k-\delta +1)$-dimensional
subspace is an anticode of diameter $2\delta-2$ with
$\qbinom{n-k+\delta-1}{\delta-1}_q$ elements and which is in general
larger than the sphere of radius $\delta-1$. Moreover Frankl and
Wilson proved in \cite{Frankl-Wilson} that  
these anticodes have the largest possible size.
 Taking such $\mathcal{A}$ in the general anticode bound, we
recover \emph{the} (best) anticode bound for $\mathcal{G}_q(n,k)$:
\begin{equation}\label{anticode-bound}
\begin{array}{ll}
 \displaystyle A_q(n,k,2\delta)\leq
 \frac{\qbinom{n}{k}_q}{\qbinom{n-k+\delta-1}{\delta-1}_q}& \displaystyle =
\frac{\qbinom{n}{k-\delta+1}_q}{\qbinom{k}{k-\delta+1}_q}\\
& \displaystyle  = \frac{ (q^n-1)(q^{n-1}-1)\dots (q^{n-k+\delta}-1)}
  {(q^k-1)(q^{k-1}-1)\dots (q^{\delta}-1)}
\end{array}
\end{equation}
It follows from the previous discussion that the anticode bound
improves the sphere-packing bound.  Moreover, the anticode bound is
usually stronger than the Singleton bound, with equality only in the
cases $n = k$ or $\delta = 1$, see~\cite{Xia-Fu}.

\subsubsection{First and second Johnson-type bound \cite{Xia-Fu}}

\begin{equation}\label{first-J-bound}
A_q(n,k,2\delta)\leq \left\lfloor \frac{(q^n-1)(q^k-q^{k-\delta})}{(q^k-1)^2-(q^n-1)(q^{k-\delta}-1)} \right\rfloor
\end{equation}
as long as $(q^k-1)^2-(q^n-1)(q^{k-\delta}-1)>0$, and 
\begin{equation}\label{second-J-bound}
A_q(n,k,2\delta)\leq \left \lfloor \frac{q^n-1}{q^k-1}A_q(n-1,k-1,2\delta)\right \rfloor.
\end{equation}
These bounds were obtained in \cite{Xia-Fu} through the construction of a binary
constant-weight code associated to every constant-dimension code. 
Iterating the latter, one  obtains
\begin{equation}\label{anticode+floor}
A_q(n,k,2\delta)\leq
\left \lfloor \frac{q^n-1}{q^k-1}
\left \lfloor \frac{q^{n-1}-1}{q^{k-1}-1}
\dots
\left \lfloor \frac{q^{n-k+\delta}-1}{q^{\delta}-1}
\right \rfloor\ \dots \right \rfloor\ \right \rfloor.
\end{equation}
If the floors are removed from the right hand side of
\eqref{anticode+floor}, the anticode bound is recovered, so \eqref{anticode+floor} is stronger.
In the particular case of $\delta=k$ and if $n \not \equiv 0\mbox{ mod
}k$, \eqref{anticode+floor} was sharpened in \cite{Etzion-Vardy} 
to
\begin{equation}
\label{A}
A_q(n,k,2k) \leq \left\lfloor \frac{q^n-1}{q^k -1} \right\rfloor -1.
\end{equation}
For $\delta=k$ and if $k$ divides $n$, we have equality in (\ref{anticode+floor}), because of the existence of \emph{spreads} (see \cite{Etzion-Vardy})
$$
A_q(n,k,2k) = \frac{q^n -1}{q^k -1}.
$$
Summing up, the strongest upper bound for constant dimension codes
reviewed so far comes by putting together (\ref{anticode+floor}) and (\ref{A}):

\begin{Thm}\label{Bound grassmann}
If $n-k \not\equiv 0\mbox{ mod }\delta$, then$$A_q(n,k,2\delta)\leq \left\lfloor \frac{q^n-1}{q^k-1} \left\lfloor \dots \left\lfloor \frac{q^{n-k+\delta+1}-1}{q^{\delta+1}-1} \left( \left\lfloor \frac{q^{n-k+\delta}-1}{q^{\delta}-1} \right\rfloor -1 \right) \right\rfloor \dots \right\rfloor \right\rfloor$$
otherwise$$A_q(n,k,2\delta)\leq \left\lfloor \frac{q^n-1}{q^k-1} \left\lfloor \dots \left\lfloor \frac{q^{n-k+\delta+1}-1}{q^{\delta+1}-1} \left\lfloor \frac{q^{n-k+\delta}-1}{q^{\delta}-1} \right\rfloor \right\rfloor \dots \right\rfloor \right\rfloor.$$
\end{Thm}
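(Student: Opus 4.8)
The plan is to assemble two facts already recorded in this section: the iterated second Johnson-type bound obtained from \eqref{second-J-bound}, and the sharpened estimate \eqref{A} (together with the spread equality) for the extremal ``$k=\delta$'' case. Recall that \eqref{anticode+floor} is obtained by iterating \eqref{second-J-bound} all the way down to dimension one and then estimating the innermost term by the integer part of the anticode bound; the improvement comes from stopping the iteration at dimension $\delta$ and estimating that innermost term more precisely.

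Concretely, I would first apply \eqref{second-J-bound} exactly $k-\delta$ times (when $k=\delta$ the iteration is vacuous and the statement reduces directly to \eqref{A}). Since one step replaces the triple $(n',k',2\delta)$ by $(n'-1,k'-1,2\delta)$, after $k-\delta$ steps the dimension has decreased from $k$ to $\delta$ while the first index has decreased from $n$ to $n-k+\delta$, giving
$$A_q(n,k,2\delta)\leq\left\lfloor\frac{q^n-1}{q^k-1}\left\lfloor\dots\left\lfloor\frac{q^{n-k+\delta+1}-1}{q^{\delta+1}-1}\,A_q(n-k+\delta,\delta,2\delta)\right\rfloor\dots\right\rfloor\right\rfloor.$$
At each step in this chain one substitutes the freshly obtained bound on the inner term $A_q(n-j,k-j,2\delta)$ into the already accumulated nest of floors; this is legitimate because for any real $c>0$ the map $x\mapsto\lfloor cx\rfloor$ is nondecreasing.

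Next I would bound the innermost factor $A_q(n-k+\delta,\delta,2\delta)$, which is precisely the situation of \eqref{A} and of the spread construction with $n$ replaced by $n-k+\delta$. The relevant divisibility condition $\delta\mid n-k+\delta$ is equivalent to $\delta\mid n-k$. Hence if $n-k\not\equiv0\pmod\delta$ then \eqref{A} yields $A_q(n-k+\delta,\delta,2\delta)\leq\lfloor(q^{n-k+\delta}-1)/(q^\delta-1)\rfloor-1$, whereas if $\delta\mid n-k$ then $(q^{n-k+\delta}-1)/(q^\delta-1)$ is an integer and the existence of spreads gives $A_q(n-k+\delta,\delta,2\delta)=(q^{n-k+\delta}-1)/(q^\delta-1)=\lfloor(q^{n-k+\delta}-1)/(q^\delta-1)\rfloor$.

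Finally, I would substitute the appropriate one of these two estimates into the displayed nested expression, once more using monotonicity of $x\mapsto\lfloor cx\rfloor$ to push the substitution outward through all $k-\delta$ floors. In the first case the innermost term becomes $\bigl(\lfloor(q^{n-k+\delta}-1)/(q^\delta-1)\rfloor-1\bigr)$ and in the second it becomes $\lfloor(q^{n-k+\delta}-1)/(q^\delta-1)\rfloor$, which are exactly the two inequalities in the statement. I do not expect a genuine obstacle here: the proof is a mechanical assembly of results already in this section, and the only thing requiring attention is the bookkeeping of exponents along the iteration together with the elementary equivalence $\delta\mid n-k\iff\delta\mid n-k+\delta$.
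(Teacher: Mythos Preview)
Your proposal is correct and follows exactly the approach the paper indicates: the paper itself does not give a formal proof but simply states that the theorem is obtained ``by putting together (\ref{anticode+floor}) and (\ref{A})'', i.e.\ iterating the second Johnson-type bound down to dimension~$\delta$ and then applying the partial-spread estimate (or the spread equality) to the innermost term. Your write-up spells out this assembly with the correct bookkeeping of exponents and the divisibility equivalence $\delta\mid n-k\iff\delta\mid n-k+\delta$, so there is nothing to add.
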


\subsection{A bound for projective codes}

Here we turn our attention to projective codes whose codewords have
not necessarily the same dimension, and we review the bound obtained
by Etzion and Vardy in~\cite{Etzion-Vardy}. The idea is to split a
code $\mathcal C$ into subcodes ${\mathcal C}_k={\mathcal C}\cap
\mathcal{G}_q(n,k)$ of constant dimension, and then to derive linear
constraints on the cardinality $|{\mathcal C}_k|$, coming from packing
constraints.

Let $B(V,e) := \{U\in {\mathcal P}(\F_q^n) : d_S(U,V)\leq e\}$ denote
the ball with center $V$ and radius $e$.  If $\dim V = i$, we have
$$
|B(V,e)|= \sum_{\ell=0}^e \sum_{j=0}^{\ell} \qbinom{i}{j}_q\qbinom{n-i}{\ell-j}_qq^{j(\ell-j)}.
$$
We define $c(i,k,e):=|B(V,e)\cap \mathcal{G}_q(n,k)|$ for $V$ of
dimension $i$.  It is not difficult to prove that
\begin{equation}\label{c(i,k,e)}
c(i,k,e)=\sum_{j=\lceil \frac{i+k-e}{2} \rceil}^{\min\{k,i\}}\qbinom{i}{j}_q\qbinom{n-i}{k-j}_qq^{(i-j)(k-j)}.
\end{equation}

\begin{Thm}[Linear programming bound for codes in $\mathcal{P}(\mathbb{F}_q^n)$, \cite{Etzion-Vardy}]\label{EV}
\begin{equation*}
 \begin{array}{ccl} A_q(n,2e+1) \leq 
\max \Big\{ \ \displaystyle \sum_{k=0}^nx_k & : & x_k\leq
A_q(n,k,2e+2) \text{ for } k=0,\dots,n,\\
\ &\ & \displaystyle\sum_{i=0}^nc(i,k,e)x_{i}\leq \qbinom{n}{k}_q
\text{ for } k=0,\dots,n\ \Big\}
\end{array} 
\end{equation*}
\end{Thm}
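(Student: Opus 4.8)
The plan is to show that every projective code $\mathcal{C} \subseteq \mathcal{P}(\F_q^n)$ with $d_S(\mathcal{C}) \geq 2e+1$ produces, via its constant-dimension layers, a feasible point of the linear program on the right-hand side whose objective value equals $|\mathcal{C}|$. Concretely, set $\mathcal{C}_k := \mathcal{C} \cap \mathcal{G}_q(n,k)$ and $x_k := |\mathcal{C}_k|$ for $k = 0, \dots, n$. Since $\mathcal{P}(\F_q^n)$ is the disjoint union of the Grassmannians $\mathcal{G}_q(n,k)$, we have $\sum_{k=0}^n x_k = |\mathcal{C}|$. Thus, once $(x_0,\dots,x_n)$ is shown to be feasible, we obtain $|\mathcal{C}| \leq \max\{\cdots\}$, and taking the supremum over all admissible $\mathcal{C}$ yields the asserted bound on $A_q(n,2e+1)$.

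First I would verify the constraints $x_k \leq A_q(n,k,2e+2)$. Two distinct subspaces in $\mathcal{C}_k$ are in particular distinct codewords of $\mathcal{C}$, hence at subspace distance at least $2e+1$; since the subspace distance restricted to $\mathcal{G}_q(n,k)$ takes only even values, their distance is in fact at least $2e+2$. Therefore $\mathcal{C}_k$ is a constant-dimension code in $\mathcal{G}_q(n,k)$ with minimum distance $\geq 2e+2$, so $x_k = |\mathcal{C}_k| \leq A_q(n,k,2e+2)$ by definition of the latter quantity.

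Next I would establish the packing constraints. Fix $k \in \{0,\dots,n\}$. The balls $B(V,e)$ with $V$ ranging over $\mathcal{C}$ are pairwise disjoint: the subspace distance is a metric, so if some $U$ lay in $B(V,e) \cap B(V',e)$ for distinct $V, V' \in \mathcal{C}$, the triangle inequality would give $d_S(V,V') \leq d_S(V,U) + d_S(U,V') \leq 2e$, contradicting $d_S(\mathcal{C}) \geq 2e+1$. Intersecting these disjoint balls with the fixed Grassmannian $\mathcal{G}_q(n,k)$ preserves disjointness, so summing cardinalities gives $\sum_{V \in \mathcal{C}} |B(V,e) \cap \mathcal{G}_q(n,k)| \leq |\mathcal{G}_q(n,k)| = \qbinom{n}{k}_q$. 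Grouping the codewords by dimension and using the definition $c(i,k,e) = |B(V,e) \cap \mathcal{G}_q(n,k)|$ for $V$ of dimension $i$, the left-hand side becomes $\sum_{i=0}^n c(i,k,e)\,|\mathcal{C}_i| = \sum_{i=0}^n c(i,k,e)\,x_i$, which is exactly the $k$-th packing inequality.

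I do not expect a genuine obstacle here; the argument is a clean layer-by-layer packing count. The two points to state carefully are the parity remark, which lets the threshold $2e+1$ be raised to $2e+2$ in the first family of constraints, and the metric property of $d_S$, which underlies the disjointness in the second family. Note that the explicit formula \eqref{c(i,k,e)} is not needed for this proof, since $c(i,k,e)$ enters the statement only through its defining description as the number of $k$-dimensional subspaces within subspace distance $e$ of a fixed $i$-dimensional subspace.
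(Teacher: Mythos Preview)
Your proof is correct and follows essentially the same approach as the paper's own proof: define $x_k=|\mathcal{C}\cap\mathcal{G}_q(n,k)|$, observe that these give a feasible point with objective value $|\mathcal{C}|$, and verify the two families of constraints via the parity of the subspace distance on a Grassmannian and the disjointness of the radius-$e$ balls. If anything, you spell out the parity step and the triangle-inequality argument a bit more explicitly than the paper does.
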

\begin{proof}
For $\mathcal{C} \subset \mathcal{P}(\mathbb{F}_q^n)$ of minimal
distance $2e+1$, and for $k=0,\dots,n$, we introduce  $x_k=|\mathcal{C}\cap
\mathcal{G}_q(n,k)|$. Then $\sum_{k=0}^n x_k=|\mathcal{C}|$ and each
$x_k$ represents the cardinality of a  subcode of $\mathcal{C}$ of
constant dimension $k$, so it is upper bounded by
$A_q(n,k,2e+2)$. Moreover, balls of radius $e$ centered at the
codewords are pairwise disjoint, so the sets $B(V,e)\cap \mathcal{G}_q(n,k)$ for $V\in \mathcal C$ are pairwise disjoint
subsets of $\mathcal{G}_q(n,k)$. So
\begin{equation*}
\sum_{V\in \mathcal C} |B(V,e)\cap \mathcal{G}_q(n,k)|\leq
|\mathcal{G}_q(n,k)|.
\end{equation*}
Because $|B(V,e)\cap \mathcal{G}_q(n,k)| = c(i,k,e)$ if $\dim(V) = i$ we obtain the second constraint
\begin{equation*}
\sum_{i=0}^n c(i,k,e) x_i\leq \qbinom{n}{k}_q.
\end{equation*}
So $|\mathcal C|$ is at most the optimal value of the linear program above.
\end{proof}

\begin{Rem}
Of course, in view of explicit computations, if the exact value of $A_q(n,k,2e+2)$ is not available, it
can be replaced in the linear program of Theorem \ref{EV} by an upper
bound.
\end{Rem}

\section{The semidefinite programming method}
\label{sec 3}

\subsection{Semidefinite programs}

A (real) semidefinite program is an optimization problem of the form:
$$
\sup\left\{\begin{array}{lll} \langle C,Y\rangle & \mbox{:} & Y\succeq 0,\ \langle A_i,Y\rangle=b_i \mbox{ for }
    i=1,\ldots, m\end{array}\right\},
$$
where\begin{itemize}
\item[] $C, A_1, \dots , A_m$ are given real symmetric matrices,
\item[] $b_1,\dots , b_m$ are given real values,
\item[] $Y$ is a real symmetric matrix, which is the optimization variable,
\item[] $\langle A, B\rangle=\operatorname{trace}(AB) $ is the inner
  product between symmetric matrices,
\item[] $Y \succeq 0$ denotes that $Y$ is symmetric and positive semidefinite.
\end{itemize}
This formulation includes linear programming as a special case when
all matrices involved are diagonal matrices. When the input data satisfies some technical assumptions (which are fulfilled for our application) then
there are polynomial time algorithms which determine an approximate optimal value.
We refer to \cite{Todd} and
\cite{SDP} for further details.

\subsection{Lov\'asz' theta number}

In \cite{Lovasz-Shannoncapacity}, Lov\'asz gave an upper bound on the
independence number $\alpha({\mathcal G})$ of a graph $\mathcal{G}=(V,E)$
as the optimal value $\vartheta(\mathcal G)$ of a semidefinite
program:

\begin{Thm}[\cite{Lovasz-Shannoncapacity}]
\begin{equation}\label{Lovasz}
\begin{array}{ccl}
\alpha(\mathcal{G}) \leq \vartheta({\mathcal G}):=\max
\Big\{ 
\sum\limits_{(x,y)\in V^2}F(x,y) & : & F\in {\mathbb R}^{V\times V},\  F \succeq 0,\\[-0.4ex]
\ & \ & \sum\limits_{x\in V} F(x,x)=1,\\[-0.2ex]
\ & \ & F(x,y)=0\text{ if } xy \in E
\Big\}\end{array}
\end{equation}
\end{Thm}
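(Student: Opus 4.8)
The plan is to exhibit, for any independent set $S \subseteq V$, a feasible matrix $F$ whose objective value equals $|S|$; this shows $\alpha(\mathcal{G}) \le \vartheta(\mathcal{G})$ since $\vartheta(\mathcal{G})$ is a maximum over all feasible $F$. The natural candidate is the (scaled) indicator of $S$: set $F(x,y) = \tfrac{1}{|S|}\mathbf{1}_{S}(x)\mathbf{1}_{S}(y)$, i.e.\ $F = \tfrac{1}{|S|} \chi_S \chi_S^{\mathsf T}$ where $\chi_S \in \mathbb{R}^V$ is the characteristic vector of $S$.

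First I would check positive semidefiniteness: $F$ is a nonnegative multiple of a rank-one matrix $\chi_S\chi_S^{\mathsf T}$, and for any $v \in \mathbb{R}^V$ we have $v^{\mathsf T} F v = \tfrac{1}{|S|}\langle \chi_S, v\rangle^2 \ge 0$, so $F \succeq 0$. Next, the trace normalization: $\sum_{x \in V} F(x,x) = \tfrac{1}{|S|}\sum_{x\in V}\mathbf{1}_S(x) = \tfrac{1}{|S|}\cdot|S| = 1$. Then the edge constraint: if $xy \in E$, then since $S$ is independent, $x$ and $y$ cannot both lie in $S$, so $\mathbf{1}_S(x)\mathbf{1}_S(y) = 0$ and hence $F(x,y) = 0$. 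Thus $F$ is feasible. Finally the objective value: $\sum_{(x,y)\in V^2} F(x,y) = \tfrac{1}{|S|}\left(\sum_{x\in V}\mathbf{1}_S(x)\right)^2 = \tfrac{1}{|S|}\cdot|S|^2 = |S|$.

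Taking $S$ to be a maximum independent set gives $\vartheta(\mathcal{G}) \ge |S| = \alpha(\mathcal{G})$, which is the claimed inequality. (One should also note the degenerate case $\alpha(\mathcal{G}) = 0$, i.e.\ $V = \emptyset$, where the bound is trivial; if $V \ne \emptyset$ then $\alpha(\mathcal{G}) \ge 1$ so dividing by $|S|$ is legitimate.)

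There is no real obstacle here: the statement is the classical ``easy direction'' of the Lov\'asz theta bound, and the entire content is verifying the three feasibility conditions and computing the objective for the rank-one test matrix. The only thing to be slightly careful about is presenting the SDP in the ``primal'' form used in \eqref{Lovasz} (maximizing $\sum F(x,y)$ over $F \succeq 0$ with $\operatorname{trace} F = 1$ and $F$ vanishing on edges) rather than the more familiar orthonormal-representation or dual formulation, but the rank-one construction above is tailored precisely to that form.
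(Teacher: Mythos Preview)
Your argument is correct and is exactly the standard rank-one construction for the easy direction of the Lov\'asz bound. The paper itself does not give a proof of this theorem; it simply cites \cite{Lovasz-Shannoncapacity} and moves on, so there is nothing further to compare.
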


Here we can write $\max$ instead of $\sup$ because one can show using
duality theory of semidefinite programming that the supremum is
attained: the Slater condition \cite[Theorem 3.1]{SDP} is fulfilled.

In the above and all along this paper, we identify a matrix indexed by
a given finite set $V$ with a function defined on $V^2$.  The program given in \eqref{Lovasz} is
one of the many equivalent formulations of Lov\'asz' original
$\vartheta(\mathcal G)$. If the constraint that $F$ only attains
nonnegative values is added, the optimal value gives a sharper bound
for $\alpha(\mathcal G)$. Traditionally this semidefinite program is
denoted by $\vartheta'(\mathcal G)$ \cite{Schrijver-comparaison}.

This method applies to bound the maximal cardinality
$\mathcal{A}(X,d)$ of codes in a metric space $X$ with prescribed
minimal distance $d$.  Indeed $\mathcal{A}(X,d) = \alpha(\mathcal{G})$
where $\mathcal{G}$ is the graph with vertex set $X$ and edges set
$\{xy : 0<d(x,y)<d\}$.  So, we obtain:

\begin{Cor}[The semidefinite programming bound]
\label{SDP-cor}
\begin{equation}\label{SDP-primal}
\begin{array}{lll}
\mathcal{A}(X,d) \leq \max\Big\{ \sum\limits_{(x,y)\in X^2}F(x,y) & : & F\in {\mathbb
  R}^{X\times X},\  F \succeq 0,\ F\geq 0,\\
& & \sum\limits_{x\in X}F(x,x)=1,\\
& & F(x,y)=0\text{ if }0<d(x,y)<d\Big\}
\end{array}
\end{equation}

\begin{equation}\label{SDP-dual}
\begin{array}{lll}
 \phantom{\mathcal{A}(X,d)} = \min \Big\{ \ t/\lambda & : & F\in {\mathbb R}^{X\times X},\   F-\lambda \succeq 0,\\
& & F(x,x) \leq t\ \text{ for all } x\in X,\\
& & F(x,y) \leq 0\ \text{ if } d(x,y) \geq d \Big\}
\end{array}
\end{equation}
\end{Cor}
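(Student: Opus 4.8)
The plan is to prove the two displayed formulations of $\mathcal{A}(X,d)$ in Corollary~\ref{SDP-cor} in two stages: first establish the primal inequality \eqref{SDP-primal} by exhibiting a feasible matrix $F$ from any code, and then show that \eqref{SDP-dual} equals \eqref{SDP-primal} via semidefinite programming duality. For the primal bound, let $\mathcal{C}\subset X$ be a code with $d(\mathcal C)\geq d$, so that $\mathcal C$ is an independent set in the graph $\mathcal G$ with edge set $\{xy : 0<d(x,y)<d\}$. Take $F = \frac{1}{|\mathcal C|}\,\chi_{\mathcal C}\chi_{\mathcal C}^{\mathsf T}$, where $\chi_{\mathcal C}\in\mathbb R^X$ is the indicator vector of $\mathcal C$. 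Then $F\succeq 0$ as a (scaled) rank-one Gram matrix, $F\geq 0$ entrywise, $\sum_{x\in X}F(x,x) = \frac{1}{|\mathcal C|}|\mathcal C| = 1$, and $F(x,y)=0$ whenever $0<d(x,y)<d$ because at most one of $x,y$ lies in $\mathcal C$ in that case; finally $\sum_{(x,y)}F(x,y) = \frac{1}{|\mathcal C|}\bigl(\sum_x \chi_{\mathcal C}(x)\bigr)^2 = |\mathcal C|$. Hence $|\mathcal C|$ is bounded by the optimal value of the maximization problem, and since $\mathcal C$ was arbitrary, so is $\mathcal{A}(X,d)$.

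For the equality with the dual program \eqref{SDP-dual}, I would write down the Lagrangian dual of the primal semidefinite program and simplify. Dualizing the equality constraint $\sum_x F(x,x)=1$ introduces a scalar multiplier, the sign constraints $F\geq 0$ (on all entries, or equivalently only on the non-edge, non-diagonal entries not already forced to zero) introduce a nonnegative matrix multiplier, and the cone constraint $F\succeq 0$ contributes a positive semidefinite slack; collecting terms and using $\langle J,F\rangle = \sum_{(x,y)}F(x,y)$ for the all-ones matrix $J$, the dual takes the shape displayed, after the standard change of variables replacing the dual matrix by $F$ again and normalizing by $\lambda$ (one checks $\lambda>0$ at optimality so the ratio $t/\lambda$ is well defined). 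The complementary form of the constraints is: $F-\lambda(J - \text{something})\succeq 0$ rearranges to $F - \lambda\succeq 0$ in the paper's shorthand (identifying the scalar $\lambda$ with $\lambda J$ or $\lambda$ times the relevant block), $F(x,x)\leq t$ on the diagonal, and $F(x,y)\leq 0$ for $d(x,y)\geq d$. Strong duality (no gap, and the sup attained) follows from the Slater condition exactly as invoked for \eqref{Lovasz}: the feasible region of the primal has a relatively interior point — for instance $F = \frac{1}{|X|}I$ scaled appropriately, or a strictly positive definite perturbation supported on the allowed entries — and $X$ is finite, so \cite[Theorem 3.1]{SDP} applies.

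The step I expect to be the main obstacle is getting the dual \eqref{SDP-dual} into precisely the stated form, since the paper uses the compressed notation ``$F-\lambda\succeq 0$'' and ``$t/\lambda$'' whose meaning must be unpacked carefully — in particular tracking which entries the sign constraint $F(x,y)\leq 0$ ranges over (all pairs with $d(x,y)\geq d$, whether or not $x=y$ is excluded) and justifying that the optimal $\lambda$ is strictly positive so that division is legitimate and the normalization $F\mapsto F/\lambda$, $t\mapsto t/\lambda$ is valid. The primal direction and the Slater verification are routine; the bookkeeping in the duality derivation is where care is needed, but it follows the same template as the passage from the graph-theoretic $\vartheta(\mathcal G)$ to its dual, so I would cite \cite{Lovasz-Shannoncapacity} and \cite{Schrijver-comparaison} and present only the key substitutions.
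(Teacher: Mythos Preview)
Your proposal is correct and follows essentially the same approach as the paper. The paper is in fact even more terse: it simply observes that $\mathcal{A}(X,d)=\alpha(\mathcal G)$ for the graph $\mathcal G$ with edge set $\{xy:0<d(x,y)<d\}$, invokes the already-stated Lov\'asz bound~\eqref{Lovasz} together with the remark that adding $F\geq 0$ yields $\vartheta'$, and then asserts that \eqref{SDP-dual} is the semidefinite dual of \eqref{SDP-primal} with strong duality guaranteed by the Slater condition; your rank-one construction $F=\tfrac{1}{|\mathcal C|}\chi_{\mathcal C}\chi_{\mathcal C}^{\mathsf T}$ and your duality outline just make explicit what the paper leaves to these citations.
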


The second semidefinite program \eqref{SDP-dual} is the dual
of~\eqref{SDP-primal}. Furthermore, by weak duality,
any feasible solution of the semidefinite program in \eqref{SDP-dual}
leads to an upper bound for $\mathcal{A}(X,d)$.

\subsection{Bounds for Grassmann codes}

In two-point homogeneous spa\-ces, the semi\-definite program in
\eqref{SDP-dual} collapses to the linear program of Delsarte, first
introduced in \cite{Delsarte-these} in the framework of association
schemes. This fact was first recognized in the case of the Hamming
space, independently in \cite{McEliece-Lovasz} and
\cite{Schrijver-comparaison}. We refer to \cite{inv-sdp} for a general
discussion on how~\eqref{SDP-primal} and~\eqref{SDP-dual} reduce 
to linear programs in the case of two-point homogeneous spaces.

Grassmann space $\mathcal{G}_q(n,k)$ is two-point
homogeneous for the action of the group $G=GL_n(\F_q)$ and  its associated zonal
polynomials are computed in \cite{Delsarte-Hahn}. They belong to the
family of $q$-Hahn polynomials, which are $q$-analogs of the Hahn
polynomials related to the binary Johnson space.

\begin{Def}\label{q-Hahn} 
  The $q$-Hahn polynomials associated to the parameters $n,s,t$ with
  $0\leq s\leq t\leq n$ are the polynomials $Q_{\ell}(n,s,t; u)$ with
  $0\leq {\ell}\leq \min(s,n-t)$ uniquely determined by the properties:
\begin{enumerate}
\item[(a)] $Q_{\ell}$ has degree ${\ell}$ in the variable $[u]=q^{1-u}\qbinom{u}{1}_q$
\item[(b)] They are orthogonal polynomials for the weights
\begin{equation*}
0\leq i\leq \min(s,n-t) \quad w(n,s,t; i)=\qbinom{s}{i}_q\qbinom{n-s}{t-s+i}_qq^{i(t-s+i)}
\end{equation*}
\item[(c)] $Q_{\ell}(0)=1$.
\end{enumerate}
\end{Def}

To be more precise, in the Grassmann space $\mathcal{G}_q(n,k)$, the
zonal polynomials are associated to the parameters $s=t=k$. The other
parameters will come into play when we analyze the full projective
space in Section~\ref{sec 4}.  The resulting linear programming bound is
explicitly stated in \cite{Delsarte-Hahn}:

\begin{Thm}[Delsarte's linear programming bound
  \cite{Delsarte-Hahn}]
\label{LP}
$$
\begin{array}{ccl}  A_q(n,k,2\delta) \leq
\min \Big\{ 
1+f_1+\dots +f_k & : & f_i\geq 0 \text{ for } i=1,\dots,k,\\
\ &\ & F(u) \leq 0 \text{ for }  u=\delta,\dots,k\Big\},
\end{array} 
$$
where $F(u)=1+\sum_{i=1}^kf_iQ_i(u)$ and $Q_i(u)=Q_i(n,k,k;u)$ as in
Definition~\ref{q-Hahn}.
\end{Thm}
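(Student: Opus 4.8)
The plan is to run Delsarte's linear programming argument inside the $q$-Hahn association scheme on $\mathcal{G}_q(n,k)$; the asserted inequality then drops out of weak duality. Equivalently one could symmetrize the semidefinite program \eqref{SDP-primal} under $GL_n(\F_q)$: by two-point homogeneity a $GL_n(\F_q)$-invariant feasible $F$ is a function of $u=k-\dim(U\cap V)$ only, and $F\succeq 0$ becomes the condition that the expansion of $F$ in the zonal polynomials has nonnegative coefficients, so \eqref{SDP-primal} collapses to the stated linear program; but the direct argument below is shorter, so I will give that. Throughout we assume $k\le n/2$, as we may by $A_q(n,k,2\delta)=A_q(n,n-k,2\delta)$.

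Recall from \cite{Delsarte-Hahn} that $\mathcal{G}_q(n,k)$ carries a symmetric association scheme with classes $R_0,\dots,R_k$, where $(U,V)\in R_u$ iff $k-\dim(U\cap V)=u$ (equivalently $d_S(U,V)=2u$), and whose primitive idempotents can be written $E_i=\tfrac{m_i}{|\mathcal{G}_q(n,k)|}\sum_{u=0}^k Q_i(u)A_u$, where $A_u$ is the adjacency matrix of $R_u$, $m_i=\operatorname{rank}E_i>0$, and the dual eigenvalue functions are exactly the $q$-Hahn polynomials $Q_i(u)=Q_i(n,k,k;u)$ of Definition~\ref{q-Hahn}; in particular every $E_i$ is positive semidefinite. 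This identification — that the degree condition (a), orthogonality (b) against the valency weights $w(n,k,k;\cdot)$, and the normalization (c) pin down the scheme's dual eigenvalues — is the one substantive input, and it is precisely what \cite{Delsarte-Hahn} supplies.

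Now let $\mathcal{C}\subseteq\mathcal{G}_q(n,k)$ have $d_S(\mathcal{C})\ge 2\delta$, put $N=|\mathcal C|$, and set $a_u=\tfrac1N\,|\{(U,V)\in\mathcal C^2:(U,V)\in R_u\}|$ for $u=0,\dots,k$. Then $a_u\ge 0$, $a_0=1$, $\sum_{u=0}^k a_u=N$, and $a_u=0$ for $1\le u\le\delta-1$ because the minimal distance is at least $2\delta$. Writing $\chi$ for the $0/1$ indicator vector of $\mathcal C$, positive semidefiniteness of $E_i$ gives
$$0\le\langle\chi,E_i\chi\rangle=\frac{m_i}{|\mathcal{G}_q(n,k)|}\sum_{u=0}^k Q_i(u)\langle\chi,A_u\chi\rangle=\frac{m_iN}{|\mathcal{G}_q(n,k)|}\sum_{u=0}^k a_uQ_i(u),$$
so $\sum_{u=0}^k a_uQ_i(u)\ge 0$ for every $i$. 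Let $(f_1,\dots,f_k)$ be feasible for the linear program, so $f_i\ge 0$ and $F(u)=1+\sum_{i=1}^k f_iQ_i(u)\le 0$ for $u=\delta,\dots,k$, and evaluate $S=\sum_{u=0}^k a_uF(u)$ in two ways. Expanding and using $\sum_u a_u=N$, $f_i\ge 0$ and the inequalities just proved,
$$S=\sum_{u=0}^k a_u+\sum_{i=1}^k f_i\sum_{u=0}^k a_uQ_i(u)\ \ge\ N,$$
while, splitting off $u=0$ and using $a_0=1$, $a_u=0$ for $1\le u\le\delta-1$, and $a_uF(u)\le 0$ for $\delta\le u\le k$,
$$S=F(0)+\sum_{u=\delta}^k a_uF(u)\ \le\ F(0)=1+f_1+\cdots+f_k,$$
the last equality by $Q_i(0)=1$. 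Hence $N\le 1+f_1+\cdots+f_k$; minimizing over feasible points and over $\mathcal C$ yields the bound on $A_q(n,k,2\delta)$. There is no genuine obstacle once the scheme is in hand — the argument is the standard bookkeeping of weak duality — the only delicate point being the cited identification of the $Q_i$ with the scheme's dual eigenvalues, on which both the positive semidefiniteness of the $E_i$ and the normalization $Q_i(0)=1$ rest.
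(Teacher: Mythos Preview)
Your argument is correct: it is the classical Delsarte weak-duality computation in the $q$-Johnson association scheme, and the one nontrivial ingredient---that the dual eigenvalues of this scheme are the $q$-Hahn polynomials $Q_i(n,k,k;\cdot)$ of Definition~\ref{q-Hahn}---is exactly what you cite from \cite{Delsarte-Hahn}.

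Note, however, that the paper does not give its own proof of Theorem~\ref{LP}: it simply states the bound with attribution to \cite{Delsarte-Hahn}, after remarking that in two-point homogeneous spaces the semidefinite program \eqref{SDP-dual} collapses to Delsarte's linear program. So the ``paper's proof'' is just this citation together with the symmetry-reduction observation. Your write-up already acknowledges both routes---the direct association-scheme argument you carry out, and the $GL_n(\F_q)$-symmetrization of \eqref{SDP-primal}---and they are essentially equivalent: the primal inner-distribution constraints $\sum_u a_uQ_i(u)\ge 0$ that you extract from $\langle\chi,E_i\chi\rangle\ge 0$ are precisely what survives of $F\succeq 0$ after averaging over $GL_n(\F_q)$, and your two evaluations of $S$ are the weak-duality pairing between that primal and the dual program in the statement. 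There is nothing to correct.
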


In order to show the power of the semidefinite programming bound, we
will verify that most of the bounds in Section~\ref{sec 2}
for Grassmann codes can be obtained from Corollary \ref{SDP-cor} or
Theorem \ref{LP}.  In each case we construct a suitable feasible
solution of \eqref{SDP-dual}.

\subsubsection{Singleton bound}
We fix an arbitrary subspace $w$ of $\F_q^n$ of dimension $n-\delta+1$. We consider a function 
$\phi:\mathcal{G}_q(n,k)\to \{u\subset w : \dim(u)=k-\delta+1\}$ such that $\phi(x)\subset x$ for all $x$.
Clearly $\dim(x\cap w)\geq k-\delta+1$. In the case of equality, we set $\phi(x)=x\cap w$. If $\dim(x\cap w)> k-\delta+1$,
$\phi(x)$ is chosen arbitrarily among the $(k-\delta+1)$-dimensional subspaces of $x\cap w$. 

We define the function
\begin{align*}
F(x,y)&=
\begin{cases} 1 \text{ if } \phi(x)=\phi(y)\\ 0 \text{ otherwise } \end{cases}\\
&=\sum_{\substack{u\subset w\\ \dim(u)=k-\delta+1}} \mathbf{1}(\phi(x)=u)\mathbf{1}(\phi(y)=u)
\end{align*}
where $\mathbf{1}(\phi(x)=u)$ denotes the characteristic
function of the set $\{x \in \mathcal{G}_q(n,k) : \phi(x) = u\}$.  Then, $F$ is
obviously positive semidefinite, and  $(F,t,\lambda)$ is a
feasible solution of \eqref{SDP-dual} where $t=1$ and 
\begin{align*}
\lambda&= \qbinom{n}{k}_q^{-2}\sum_{(x,y)\in \mathcal{G}_q(n,k)^2} F(x,y)\\
&=\qbinom{n}{k}_q^{-2}\sum_{\substack{u\subset w\\ \dim(u)=k-\delta+1}} \Big(\sum_{x\in \mathcal{G}_q(n,k)}\mathbf{1}(\phi(x)=u)\Big)^2.
\end{align*}
It follows from Cauchy-Schwarz inequality  that $\lambda\geq \qbinom{n-\delta+1}{k-\delta+1}_q$ so  the Singleton bound \eqref{singleton-bound} is recovered 
from \eqref{SDP-dual}.

\subsubsection{Sphere-packing and anticode bounds} 

The sphere-packing bound and the anticode bound in
$\mathcal{G}_q(n,k)$ can also be obtained directly, with 
$$
F(x,y)=\sum_{\dim(z)=k}\mathbf{1}_{B(z,\delta
  -1)}(x)\mathbf{1}_{B(z,\delta -1)}(y),
$$
and
$$
F(x,y)=\sum_{\dim(z)=k-\delta+1}\mathbf{1}(z\subset
x)\mathbf{1}(z\subset y)\ .
$$
In general, the anticode bound $|\mathcal{C}|\leq |X|/|\mathcal{A}|$
can be derived from \eqref{SDP-dual}, using the function
$F(x,y)=\sum_{g\in
  G}\mathbf{1}_{\mathcal{A}}(gx)\mathbf{1}_{\mathcal{A}}(gy)$.

\subsubsection{First Johnson-type bound}

We want to apply Delsarte's linear programming bound of Theorem
\ref{LP} with a function $F$ of degree~$1$, i.e.\
$F(u)=f_0Q_0(u)+f_1Q_1(u)$.  According to \cite{Delsarte-Hahn} the
first $q$-Hahn polynomials are$$Q_0(u)=1\ \ ,\ \
Q_1(u)=\left(1-\frac{(q^n-1)(1-q^{-u})}{(q^k-1)(q^{n-k}-1)}\right).$$
In order to construct a feasible solution of the linear program, we
need $f_0, f_1 \geq 0$ for which $F(u)=f_0+f_1Q_1(u)$ is non-positive
for $u=\delta,\dots,k$. Then $1+f_1/f_0$ will be an upper bound for
$A_q(n,k,2\delta)$. As $Q_1(u)$ is decreasing, the optimal choice of
$(f_0,f_1)$ satisfies $F(\delta)=0$. So $f_1/f_0=-1/Q_1(\delta)$ and
we need $Q_1(\delta)<0$. We obtain \eqref{first-J-bound}:
$$
A_q(n,k,2\delta)\leq
1+\frac{f_1}{f_0}=1-\frac{1}{Q_1(\delta)}=\frac{(q^n-1)(q^{k}-q^{k-\delta})}{(q^k-1)^2-(q^n-1)(q^{k-\delta}-1)}.
$$

\subsubsection{Second Johnson-type bound} 

Here we find an inequality for the optimal value $B_q(n,k,2\delta)$ of
the semidefinite program~\eqref{SDP-dual} in the case
$X=\mathcal{G}_q(n,k)$ (with the subspace distance) which
resembles~\eqref{second-J-bound}:
$$
B_q(n,k,2\delta) \leq \frac{q^n-1}{q^k-1}B_q(n-1,k-1,2\delta).
$$
Let $(F,t,\lambda)$ be an optimal solution for the program
\eqref{SDP-dual} in $\mathcal{G}_q(n-1,k-1)$ relative to the minimal
distance $2\delta$, i.e. $F$ satisfies the conditions: $F \succeq
\lambda$, $F(x,x) \leq t$, $F(x,y)\leq 0\mbox{ if }d(x,y)\geq
2\delta$, and $t/\lambda = B_q(n-1,k-1,2\delta)$. We consider the
function $G$ on $\mathcal{G}_q(n,k) \times \mathcal{G}_q(n,k)$ given
by
$$
G(x,y)=\sum_{\dim(D)=1}\mathbf{1}(D \subset x)\mathbf{1}(D \subset y)F(x
\cap H_D, y \cap H_D),
$$
where, for every one-dimensional space $D$, $H_D$ is an arbitrary
hyperplane such that $D \oplus H_D = \mathbb{F}_q^n$.  It can be
verified that the triple $(G,t',\lambda')$ is a feasible solution of the program
\eqref{SDP-dual} in $\mathcal{G}_q(n,k)$ for the minimal distance
$2\delta$, where $t'=t\qbinom{k}{1}_q$ and $\lambda'=\lambda\qbinom{k}{1}_q^2/\qbinom{n}{1}_q$, thus leading to the upper bound
$$
B_q(n,k,2\delta) \leq \frac{t'}{\lambda'} = \frac{t}{\lambda} \frac{q^n-1}{q^k-1} =
\frac{q^n-1}{q^k-1}B_q(n-1,k-1,2\delta).
$$

\begin{Rem} In \cite{Etzion-Vardy} another Johnson-type bound is given:
$$
A_q(n,k,2\delta) \leq \frac{q^n-1}{q^{n-k}-1}A_q(n-1,k,2\delta),
$$
which follows easily from the second Johnson-type bound combined with
the equality $A_q(n,k,2\delta)=A_q(n,n-k,2\delta)$. Similarly to
above, an analogous inequality holds for the semidefinite programming
bound~$B_q(n,k,2\delta)$.
\end{Rem}

\section{Semidefinite programming bounds for projective codes}
\label{sec 4}

In this section we perform a symmetry reduction of the semidefinite
programs \eqref{SDP-primal} and \eqref{SDP-dual} in the case of
projective space, under the action of the group $G =
GL_n(\mathbb{F}_q)$. We follow the general method described in
\cite{inv-sdp}. The key point is that these semidefinite programs are
left invariant under the action of $G$ so the set of feasible
solutions can be restricted to $G$-invariant functions~$F$. The main
work is to compute an explicit description of the $G$-invariant
positive semidefinite functions on the projective space.

\subsection{$G$-invariant positive semidefinite functions on projective spaces}

In order to compute these functions, we use the decomposition of
the space of real-valued functions under the action of $G$. We take
the following notations:
\[
X=\mathcal{P}(\mathbb{F}_q^n), \quad X_k=\mathcal{G}_q(n,k), \quad \mathbb{R}^X=\{f:X\rightarrow
\mathbb{R}\}.
\]
The space $\R^X$ is endowed with the inner product $(,)$ defined by:
\begin{equation*}
(f,g)=\frac{1}{|X|}\sum_{x\in X} f(x)g(x).
\end{equation*}
For $k=0,\dots,n$, an element of $\R^{X_k}:=\{f:X_k\to \R\}$ is identified with the
element of $\R^X$ that takes the same value on $X_k$ and the value $0$
outside of $X_k$. In this way, we see the spaces $\R^{X_k}$ as
pairwise orthogonal subspaces of $\R^X$.

Delsarte~\cite{Delsarte-Hahn} showed that the irreducible
decomposition of the $\mathbb{R}^{X_k}$ under the action of $G$ is
given by the \emph{harmonic subspaces} $H_{k,i}$:
\begin{equation}
\label{Grassmann-decomposition}
\mathbb{R}^{X_k}=H_{0,k} \oplus H_{1,k} \oplus \dots \oplus
H_{\min\{k,n-k\},k}
\end{equation}
Here, $H_{k,k}$ is the kernel of the \emph{differentiation operator}
$$
\begin{array}{clcl}
\delta_k: & \mathbb{R}^{X_k} & \longrightarrow & \mathbb{R}^{X_{k-1}}\\
\ & f & \longrightarrow & \left[\ x \rightarrow \sum\{f(y): \dim(y)=k, x \subset y\}\ \right]
\end{array}
$$
and $H_{k,i}$ is the image of $H_{k,k}$ under the \emph{valuation operator}
$$
\begin{array}{clcl}
\psi_{ki}: &\mathbb{R}^{X_k} & \longrightarrow & \mathbb{R}^{X_{i}}\\
\ & f & \longrightarrow & \left[\ x \rightarrow \sum\{f(y): \dim(y)=k, y \subset x\}\ \right]
\end{array}
$$
for $k\leq i \leq n-k$.
Because $\delta_k$ is surjective, we have $h_k:=\dim(H_{k,k}) =
\qbinom{n}{k}_q - \qbinom{n}{k-1}_q.$ Moreover, $\psi_{ki}$ commutes
with the action of $G$, so $H_{k,i}$ is isomorphic to $H_{k,k}$. 
Putting together the spaces  $\mathbb{R}^{X_k}$ one gets the global
picture:
\begin{center}
\small
$$
\begin{array}{ccccccccccccccccc}
\mathbb{R}^X & = & \mathbb{R}^{X_0} & \oplus & \mathbb{R}^{X_1} & \oplus
& \cdots & \oplus & \mathbb{R}^{X_{\left\lfloor \frac{n}{2} \right
    \rfloor}} & \oplus & \cdots & \oplus & \mathbb{R}^{X_{n-1}}
&\oplus & \mathbb{R}^{X_n}\\
& & & & & & & & & & & & & & \\
\mathcal{I}_0 & = & H_{0,0} & \oplus &  H_{0,1} & \oplus & \cdots &
\oplus & H_{0,\left\lfloor \frac{n}{2} \right\rfloor}  & \oplus & \cdots & \oplus & H_{0, (n-1)} & \oplus & H_{0,n}\\
\mathcal{I}_1 & = & & &  H_{1,1} & \oplus & \cdots & \oplus &
H_{1,\left\lfloor \frac{n}{2} \right\rfloor} & \oplus &  \cdots & \oplus & H_{1, (n-1)} & & &\\
\mathcal{I}_2 & = & & &  & & \cdots & \oplus &
H_{2,\left\lfloor \frac{n}{2} \right\rfloor} & \oplus &  \cdots &  &
& & &\\
\vdots & = & & &  & & \ddots & & & &   &  &
& & &\\
\vdots & = & & &  & &  & \ddots & &  &  &  &
& & &\\
\mathcal{I}_{\left\lfloor \frac{n}{2} \right \rfloor} & = & & & & & & &
H_{\left\lfloor \frac{n}{2} \right\rfloor,\left\lfloor \frac{n}{2}
  \right\rfloor} &\oplus & \cdots& & & & &\\
\end{array}
$$
\end{center}
Here, the columns give the irreducible decomposition
(\ref{Grassmann-decomposition}) of the spaces $\mathbb{R}^{X_k}$.  The
irreducible components which lie in the same row are all isomorphic,
and together they form the \emph{isotypic
  components}
$$
\mathcal{I}_m:=H_{m,m}\oplus H_{m,m+1} \oplus \dots
\oplus H_{m,n-m} \simeq H_{m,m}^{n-2m+1}.
$$ 
Starting from this decomposition, one builds the \emph{zonal matrices}
$E_k(x,y)$ \cite[Section 3.3]{inv-sdp} in the following way. We take
an isotypic component $\mathcal{I}_k$ and we fix an orthonormal basis
$(e_{kk1}, \dots, e_{kkh_k})$ of $H_{k,k}$. Let
$e_{ksi}:=\psi_{ks}(e_{kki})$. It follows from  \cite[Theorem
3]{Delsarte-Hahn} that $(e_{ks1}, \dots,
e_{ksh_k})$ is an orthogonal basis of $H_{k,s}$ and that 
\begin{equation}\label{inner product}
(e_{ksi},e_{ksi})=\qbinom{n-2k}{s-k}_q q^{k(s-k)}.
\end{equation}
Then we define $E_k(x,y) \in
\mathbb{R}^{(n-2k+1)\times (n-2k+1)}$ entrywise by
\begin{equation}\label{defEkst}
E_{kst}(x,y)=\sum_{i=1}^{h_k}e_{ksi}(x)e_{kti}(y).
\end{equation}
We note that \cite[Theorem
3.3]{inv-sdp} requires orthonormal basis in every subspace, while  in
the definition \eqref{defEkst} of $E_{kst}$ we do not normalize the
vectors $e_{ksi}$. Because the norms \eqref{inner
  product} do not depend on $i$, but only on $k,s$, the matrix $(E_k'(x,y))_{s,t}$ associated to the normalized basis
is obtained from $(E_k(x,y))_{s,t}$ by left and  right multiplication by a diagonal matrix. 
So the  characterization of the $G$-invariant positive semidefinite
functions given in  \cite[Theorem
3.3]{inv-sdp} holds aswell with \eqref{defEkst}:
\begin{Thm}
\label{blockdiag}
$F \in \mathbb{R}^{X\times X}$ is positive semidefinite and $G$-invariant if and only if it can be written as
\begin{equation}
\label{Bochner-blockdiag}
F(x,y)=\sum_{k=0}^{\lfloor n/2\rfloor} \langle F_k, E_k(x,y)\rangle
\end{equation}
where  $F_k\in {\mathbb R}^{(n-2k+1)\times (n-2k+1)}$ and
$F_0,\dots,F_{\lfloor n/2\rfloor}$ are positive semidefinite.
\end{Thm}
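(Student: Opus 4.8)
The plan is to deduce Theorem~\ref{blockdiag} from the general Bochner-type characterization of $G$-invariant positive semidefinite kernels on a finite $G$-set, as stated in \cite[Theorem 3.3]{inv-sdp}, by combining it with the explicit $G$-module structure of $\mathbb{R}^X$ recorded above. Recall that the general principle says: if $\mathbb{R}^X = \bigoplus_m \mathcal{I}_m$ is the decomposition into isotypic components, with each $\mathcal{I}_m \simeq H_{m,m}^{\oplus d_m}$ for the irreducible $H_{m,m}$ of multiplicity $d_m$, then a symmetric kernel $F$ is $G$-invariant and positive semidefinite exactly when $F(x,y) = \sum_m \langle F_m, E_m(x,y)\rangle$ for positive semidefinite matrices $F_m$ of size $d_m$, where $E_m(x,y)$ is the zonal matrix built from an orthonormal $G$-adapted basis of $\mathcal{I}_m$. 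So the main task is to verify that our situation is an instance of this, identify $d_m = n-2m+1$ (the number of harmonic subspaces $H_{m,s}$, $s = m, \dots, n-m$, appearing in row $m$ of the global picture), and then account for the fact that we have chosen to work with a \emph{non-orthonormal} basis in \eqref{defEkst}.

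First I would recall from the displayed decomposition that $\mathbb{R}^X = \bigoplus_{k=0}^{\lfloor n/2\rfloor} \mathcal{I}_k$ with $\mathcal{I}_k = H_{k,k} \oplus H_{k,k+1} \oplus \dots \oplus H_{k,n-k}$, that each $H_{k,s}$ is an irreducible $G$-module isomorphic to $H_{k,k}$ via $\psi_{ks}$ (this is stated in the text, citing \cite{Delsarte-Hahn}), and that distinct $\mathcal{I}_k$ are non-isomorphic (they are pairwise orthogonal and the $H_{k,k}$ for different $k$ are non-isomorphic irreducibles, since $\dim H_{k,k} = h_k$ are distinct and more fundamentally they are distinct $GL_n(\F_q)$-representations). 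Hence $\mathcal{I}_k \simeq H_{k,k}^{\oplus(n-2k+1)}$ is precisely the isotypic component of the irreducible type of $H_{k,k}$, and the multiplicity is $d_k = n-2k+1$, matching the size of the matrices $F_k$ in the statement. Applying \cite[Theorem 3.3]{inv-sdp} verbatim with the normalized zonal matrices $E_k'(x,y)$ then gives: $F$ is $G$-invariant PSD iff $F(x,y) = \sum_k \langle F_k', E_k'(x,y)\rangle$ with each $F_k' \succeq 0$.

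It remains to pass from $E_k'$ to the unnormalized $E_k$ of \eqref{defEkst}, and here I would use exactly the observation already made in the paragraph preceding the theorem: by \eqref{inner product}, $(e_{ksi}, e_{ksi}) = \qbinom{n-2k}{s-k}_q q^{k(s-k)} =: \nu_{ks}$ depends only on $k$ and $s$, not on $i$. Therefore the orthonormal vectors are $\bar e_{ksi} = \nu_{ks}^{-1/2} e_{ksi}$, and the normalized zonal matrix satisfies $E_k'(x,y) = D_k E_k(x,y) D_k$ where $D_k = \mathrm{diag}(\nu_{k,k}^{-1/2}, \nu_{k,k+1}^{-1/2}, \dots, \nu_{k,n-k}^{-1/2})$ is a fixed positive definite diagonal matrix. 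Consequently $\langle F_k', E_k'(x,y)\rangle = \langle F_k', D_k E_k(x,y) D_k\rangle = \langle D_k F_k' D_k, E_k(x,y)\rangle$, and setting $F_k := D_k F_k' D_k$ gives a bijection between PSD matrices $F_k'$ and PSD matrices $F_k$ (conjugation by the invertible $D_k$ preserves positive semidefiniteness in both directions). This yields \eqref{Bochner-blockdiag} with $F_0, \dots, F_{\lfloor n/2\rfloor} \succeq 0$, and conversely any such expression is $G$-invariant (each $E_{kst}(x,y)$ is $G$-invariant because the bases are $G$-adapted and $\psi_{ks}$ commutes with $G$) and PSD (it is a sum of Gram-type kernels $\sum_i g_{ksi}(x) g_{ksi}(y)$ after writing $F_k = \sum_\ell c_\ell c_\ell^{\mathsf T}$).

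The only genuine subtlety — the part I expect to require the most care rather than being purely routine — is confirming that the framework of \cite[Theorem 3.3]{inv-sdp} applies here with the $H_{k,s}$ as described: namely that the $e_{ksi} = \psi_{ks}(e_{kki})$ really do form, for each fixed $i$, a basis of a copy of $H_{k,k}$ inside $\mathcal{I}_k$ that is \emph{consistent across $s$} in the sense demanded by the construction of zonal matrices (so that $E_{kst}$ is well-defined and $G$-invariant independent of the choices), and that the multiplicity spaces are correctly matched. But all of this is already supplied: the text cites \cite[Theorem 3]{Delsarte-Hahn} for the orthogonality and the norm formula \eqref{inner product}, and \cite[Section 3.3]{inv-sdp} for the zonal-matrix construction, so the proof is essentially an assembly of these two inputs plus the diagonal-scaling remark. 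I would write it out in that order: (i) recall the isotypic decomposition and multiplicities; (ii) invoke \cite[Theorem 3.3]{inv-sdp} with orthonormal bases; (iii) rescale by the diagonal $D_k$ using \eqref{inner product} to obtain the stated form; (iv) note that the correspondence $F_k' \leftrightarrow F_k$ preserves positive semidefiniteness, and that the converse direction is immediate.
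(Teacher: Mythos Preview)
Your proposal is correct and follows essentially the same approach as the paper: the paper does not give a separate proof but, in the paragraph immediately preceding the theorem, invokes \cite[Theorem 3.3]{inv-sdp} for the characterization with orthonormal bases $E_k'$ and then observes that, since the norms \eqref{inner product} depend only on $k,s$, one has $E_k' = D_k E_k D_k$ for a diagonal matrix $D_k$, so the characterization transfers to the unnormalized $E_k$. Your write-up spells out exactly this argument in more detail, including the explicit bijection $F_k \leftrightarrow D_k F_k' D_k$ and the verification that it preserves positive semidefiniteness.
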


Now we compute the $E_k$'s explicitly. They are zonal matrices: in other
words, for all $k\leq s, t \leq n-k$, for all $g \in G,
E_{kst}(x,y)=E_{kst}(gx,gy)$. This means that $E_{kst}$ is a
function of the variables which parametrize the orbits of $G$ on
$X \times X$. It is easy to see that the orbit of the pair
$(x,y)$ is characterized by the triple $(\dim(x),\dim(y),\dim(x\cap y))$.

The next theorem gives an explicit expression of $E_{kst}$, in terms
of the polynomials $Q_k$ of Definition \ref{q-Hahn}. 

\begin{Thm}
\label{Ekst}
If $k\leq s\leq t\leq n-k$, $\dim(x)=s$, $\dim(y)=t$,
\begin{equation*} 
E_{kst}(x,y)= |X| h_k\frac{\qbinom{t-k}{s-k}_q\qbinom{n-2k}{t-k}_q}{\qbinom{n}{t}_q\qbinom{t}{s}_q}q^{k(t-k)}Q_k(n,s,t; s-\dim(x\cap y))
\end{equation*}
If $\dim(x)\neq s$ or $\dim(y)\neq t$, $E_{kst}(x,y)=0$.
\end{Thm}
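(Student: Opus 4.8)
The plan is to compute $E_{kst}(x,y)$ by relating the zonal matrices on the Grassmann pieces to the $q$-Hahn polynomials already computed in \cite{Delsarte-Hahn}. First I would observe that $E_{kst}(x,y)=0$ unless $\dim(x)=s$ and $\dim(y)=t$ is immediate from the definition \eqref{defEkst}, since $e_{ksi}\in\mathbb{R}^{X_s}$ vanishes off $X_s$. So fix $\dim(x)=s$, $\dim(y)=t$ with $k\le s\le t\le n-k$. The key point is that $E_{kst}$ is the reproducing kernel of the pair of isomorphic subspaces $H_{k,s}\subset\mathbb{R}^{X_s}$ and $H_{k,t}\subset\mathbb{R}^{X_t}$: if $\pi_{ks}$ denotes orthogonal projection onto $H_{k,s}$, then $\sum_i e_{ksi}(x)e_{kti}(y)$ is, up to the normalization coming from \eqref{inner product}, exactly the kernel representing the composition of the inclusion, the isomorphism $\psi_{ks}^{-1}\circ\psi_{kt}$ (suitably interpreted), and the projection. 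I would therefore work in two stages: first treat the "diagonal" cases $E_{kkk}$, $E_{kss}$ on a single Grassmann space, then bootstrap to general $s,t$ using the valuation operators $\psi$.

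For the diagonal case $s=t$, the matrix entry $E_{kss}(x,y)=\sum_i e_{ksi}(x)e_{ksi}(y)$ is a $G$-invariant positive semidefinite function on $X_s\times X_s$, hence a function of $\dim(x\cap y)$ only, and it is proportional to the zonal (spherical) function for the $G$-module $H_{k,s}$ inside $\mathbb{R}^{X_s}$. By \cite[Theorem 3]{Delsarte-Hahn} this zonal function is precisely the $q$-Hahn polynomial $Q_k(n,s,s;\,s-\dim(x\cap y))$ (this is the content of the statement that the $Q_k$ are the zonal polynomials of the association scheme attached to $\mathcal{G}_q(n,s)$, with the harmonic space $H_{k,s}$ being the $k$-th eigenspace). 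The proportionality constant is pinned down by evaluating at $x=y$: then $\dim(x\cap y)=s$, $Q_k(0)=1$, and $\sum_i e_{ksi}(x)^2$ summed over $x\in X_s$ equals $h_k\cdot(e_{ksi},e_{ksi})\cdot|X_s|\cdot|X|^{-1}$-type bookkeeping via \eqref{inner product}, i.e. $|X_s|$ times the pointwise value; comparing gives the factor $|X|\,h_k\,\qbinom{n-2k}{s-k}_q q^{k(s-k)}/\qbinom{n}{s}_q$. One checks this matches the claimed formula at $s=t$.

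To pass to $s<t$, I would use that $e_{kti}=\psi_{kt}(e_{kki})$ while $e_{ksi}=\psi_{ks}(e_{kki})$, and that the valuation operators are transitive in the sense that $\psi_{kt}=\psi_{st}\circ\psi_{ks}$ up to a scalar depending on $k,s,t$ (lifting a fixed $k$-space sitting inside an $s$-space sitting inside a $t$-space: the multiplicity of the intermediate $s$-space is $\qbinom{t-k}{s-k}_q$). Hence for $y$ of dimension $t$,
\begin{equation*}
E_{kst}(x,y)=\sum_{i}e_{ksi}(x)\,\psi_{st}(e_{ksi})(y)\cdot c_{kst}
=c_{kst}\sum_{\substack{z\subset y\\ \dim z=s}}\Bigl(\sum_i e_{ksi}(x)e_{ksi}(z)\Bigr)
=c_{kst}\sum_{\substack{z\subset y\\ \dim z=s}}E_{kss}(x,z),
\end{equation*}
for an explicit constant $c_{kst}$ built from the ratios of norms in \eqref{inner product} and the $q$-binomial multiplicities. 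Now $E_{kss}(x,z)=C_{ks}\,Q_k(n,s,s;\,s-\dim(x\cap z))$ from the previous step, and the inner sum over $s$-dimensional $z\subset y$ of a $q$-Hahn polynomial in $\dim(x\cap z)$ is a known transformation that produces $Q_k(n,s,t;\,s-\dim(x\cap y))$ times an explicit $q$-binomial factor — this is again a result from \cite{Delsarte-Hahn} (the three-parameter $q$-Hahn polynomials are exactly designed so that such "shadow/trace" sums between Grassmannians of different dimensions are diagonalized by them). Collecting all the scalars and simplifying the $q$-binomial bookkeeping yields the stated closed form.

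The main obstacle I anticipate is not any single conceptual step but the careful tracking of the normalization constants: there are three sources of scalars — the norms $\qbinom{n-2k}{s-k}_q q^{k(s-k)}$ from \eqref{inner product} (note we use an orthogonal rather than orthonormal basis), the multiplicity constants $c_{kst}$ from iterating $\psi$, and the $q$-binomial identities needed to collapse sums like $\sum_{z\subset y}\qbinom{\dim(x\cap z)}{\cdot}_q$ into a single $q$-Hahn evaluation. Getting the final prefactor $|X|h_k\qbinom{t-k}{s-k}_q\qbinom{n-2k}{t-k}_q\big/\bigl(\qbinom{n}{t}_q\qbinom{t}{s}_q\bigr)q^{k(t-k)}$ exactly right is where the real work lies; the structural argument above essentially dictates the shape of the answer, and consistency checks at $s=t$ and at $x=y$ (forcing $Q_k(0)=1$) serve to validate the constants.
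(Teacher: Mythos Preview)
Your approach is valid in outline but genuinely different from the paper's. The paper does not reduce the off-diagonal case to the diagonal one via a summation over $z\subset y$. Instead it writes $E_{kst}(x,y)=P_{kst}(s-\dim(x\cap y))$ and proves directly that the family $\{P_{kst}\}_k$ satisfies the defining orthogonality relations of the $q$-Hahn polynomials with parameters $(n,s,t)$: one lemma computes $P_{kst}(0)$ (using the same relation $\psi_{st}\circ\psi_{ks}=\qbinom{t-k}{s-k}_q\psi_{kt}$ you invoke), and a second lemma obtains $\sum_i w(n,s,t;i)P_{kst}(i)P_{\ell st}(i)$ by evaluating $\sum_y E_{kst}(x,y)E_{\ell s't'}(y,z)$ and specializing. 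Together with the fact that $P_{kst}$ is a polynomial of degree at most $k$ in $[u]$ (extended from the $s=t$ case in \cite{Delsarte-Hahn}), this identifies $P_{kst}$ with $Q_k(n,s,t;\cdot)$ up to the constant $P_{kst}(0)$.

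The potential weak point in your route is the sentence ``the inner sum over $s$-dimensional $z\subset y$ of a $q$-Hahn polynomial in $\dim(x\cap z)$ is a known transformation that produces $Q_k(n,s,t;\,s-\dim(x\cap y))$ times an explicit $q$-binomial factor --- this is again a result from \cite{Delsarte-Hahn}.'' This summation identity is precisely the heart of the theorem, and it is not stated as such in \cite{Delsarte-Hahn}; that paper treats only the single-Grassmannian (diagonal $s=t$) case. To justify your claim you would have to show that your sum, viewed as a function of $u=s-\dim(x\cap y)$, is a polynomial of degree $\le k$ in $[u]$ orthogonal with respect to the weights $w(n,s,t;\cdot)$ --- which is exactly what the paper proves for $P_{kst}$ from scratch. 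So your reduction $E_{kst}(x,y)=\qbinom{t-k}{s-k}_q^{-1}\sum_{z\subset y}E_{kss}(x,z)$ is correct and elegant, but it does not by itself bypass the orthogonality-and-degree argument; it merely repackages where that argument has to be done. If you can locate a genuine addition/connection formula (perhaps in Dunkl \cite{Dunkl}) that handles the sum directly, your route becomes a clean alternative; otherwise you should expect to reproduce the paper's two lemmas in disguise.
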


We note that the weights involved in the orthogonality relations of
the polynomials $Q_k$ have a combinatorial meaning:

\begin{Lemma}[\cite{Dunkl}]
\label{weights}
Given $x\in X_s$, the number of elements $y\in X_t$ such that $\dim(x\cap y)=s-i$ is equal to $w(n,s,t;i)$.
\end{Lemma}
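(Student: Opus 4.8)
The plan is to count ordered pairs $(x,y)$ with $x\in X_s$, $y\in X_t$ and $\dim(x\cap y)=s-i$ in two ways, exploiting the fact that $G=GL_n(\F_q)$ acts transitively on $X_s$ and that for fixed $x$ the stabilizer permutes the $y$'s with a given intersection dimension transitively. Concretely, I would first observe that since any two $s$-dimensional subspaces are $G$-equivalent, the number $N(n,s,t;i)$ of $y\in X_t$ with $\dim(x\cap y)=s-i$ does not depend on the choice of $x\in X_s$; this is exactly the quantity we must identify with $w(n,s,t;i)=\qbinom{s}{i}_q\qbinom{n-s}{t-s+i}_q q^{i(t-s+i)}$.

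The main computation is then a direct enumeration. Fix $x\in X_s$. A subspace $y\in X_t$ with $\dim(x\cap y)=s-i$ is built in two independent stages: first choose the subspace $y\cap x$, which is an $(s-i)$-dimensional subspace of the $s$-dimensional space $x$, giving $\qbinom{s}{s-i}_q=\qbinom{s}{i}_q$ choices; then extend $y\cap x$ to a $t$-dimensional subspace $y$ of $\F_q^n$ that meets $x$ in exactly $y\cap x$ and nothing more. For the second stage I would pass to the quotient $\F_q^n/(y\cap x)$, which has dimension $n-(s-i)$; in it, $x/(y\cap x)$ has dimension $i$, and $y/(y\cap x)$ must be a subspace of dimension $t-(s-i)=t-s+i$ that meets $x/(y\cap x)$ trivially. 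Counting subspaces of a fixed dimension $m$ in an $N$-dimensional space that intersect a fixed $j$-dimensional subspace trivially is a standard computation: choosing an ordered basis $v_1,\dots,v_m$ avoiding the $j$-dimensional subspace successively, one gets $\prod_{\ell=0}^{m-1}(q^N-q^{j+\ell})$ ordered choices, and dividing by the number $\prod_{\ell=0}^{m-1}(q^m-q^\ell)$ of ordered bases of an $m$-space yields $q^{jm}\qbinom{N-j}{m}_q$. Here $N=n-s+i$, $j=i$, $m=t-s+i$, so the count is $q^{i(t-s+i)}\qbinom{n-s}{t-s+i}_q$.

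Multiplying the two stages gives
\begin{equation*}
N(n,s,t;i)=\qbinom{s}{i}_q\,q^{i(t-s+i)}\,\qbinom{n-s}{t-s+i}_q = w(n,s,t;i),
\end{equation*}
which is the claim.

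The only genuine obstacle is the trivial-intersection subspace count in the quotient; everything else is bookkeeping about dimensions and the invariance under $G$. I would make sure the ranges are consistent — one needs $0\le s-i$, i.e.\ $i\le s$, and $t-s+i\le n-s+i$, i.e.\ $t\le n$, together with $t-s+i\ge 0$; these are precisely the conditions under which $w(n,s,t;i)\ne 0$, matching the index range $0\le i\le\min(s,n-t)$ appearing in Definition~\ref{q-Hahn}. An equivalent route, if one prefers to avoid quotients, is to count ordered bases directly: pick an ordered basis of $x\cap y$ inside $x$, extend it within $x$ to an ordered basis of $x$ (this handles the $\qbinom{s}{i}_q$ factor), then adjoin $t-s+i$ further vectors spanning $y$ modulo $x\cap y$ and lying outside $x$; the number of such extensions, after dividing out by the appropriate basis-change groups, produces the same $q^{i(t-s+i)}\qbinom{n-s}{t-s+i}_q$ factor. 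Either way the identification with $w(n,s,t;i)$ is immediate.
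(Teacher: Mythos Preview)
Your argument is correct. The paper does not supply a proof of this lemma at all; it merely attributes the statement to Dunkl~\cite{Dunkl} and moves on to use it in the proof of Theorem~\ref{Ekst}. Your direct enumeration---choosing the intersection $z=x\cap y$ inside $x$ and then passing to the quotient $\F_q^n/z$ to count complements of $x/z$ of the right dimension---is the standard self-contained proof, and every step (including the trivial-intersection count $q^{jm}\qbinom{N-j}{m}_q$) is justified. So rather than differing from the paper's approach, your write-up fills in what the paper outsourced to a reference.
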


\begin{proof}[Proof of Theorem \ref{Ekst}]
  By construction, $E_{kst}(x,y) \neq 0$ only if $\dim(x)= s$ and
  $\dim(y)= t$, so in this case $E_{kst}$ is a function of
  $(s-\dim(x\cap y))$.  Accordingly, for $k\leq s \leq t\leq n-k $, we
  introduce $P_{kst}$ such that
  $E_{kst}(x,y)=P_{kst}(s-\dim(x\cap y))$. Now we want to relate
  $P_{kst}$ to the $q$-Hahn polynomials. We start with two lemmas:
  one obtains the orthogonality relations satisfied by $P_{kst}$ and
  the other computes $P_{kst}(0)$.

\begin{Lemma} 
With the above notations,
\begin{equation}\label{e3}
P_{kst}(0)=|X| h_k
\frac{\qbinom{t-k}{s-k}_q\qbinom{n-2k}{t-k}_q}{\qbinom{n}{t}_q\qbinom{t}{s}_q}q^{k(t-k)}.
\end{equation}
\end{Lemma}

\begin{proof} 
  We have $P_{kst}(0)=E_{kst}(x,y)$ for all $x,y$ with
  $\dim(x)=s$, $\dim(y)=t$, $x\subset y$. Hence,
\begin{align*}
P_{kst}(0) &= \frac{1}{\qbinom{n}{t}_q\qbinom{t}{s}_q}
\sum_{\substack{\dim(x)=s\\ \dim(y)=t\\x\subset y}} E_{kst}(x,y)\\
&= \frac{1}{\qbinom{n}{t}_q\qbinom{t}{s}_q}
\sum_{\substack{\dim(x)=s\\ \dim(y)=t\\x\subset y}} 
\sum_{i=1}^{h_k} e_{ksi}(x)e_{kti}(y)\\
\end{align*}
\begin{align*}
P_{kst}(0) &= \frac{1}{\qbinom{n}{t}_q\qbinom{t}{s}_q} 
\sum_{i=1}^{h_k}\sum_{\dim(y)=t} \Big(\sum_{\substack{\dim(x)=s \\x\subset y}}  e_{ksi}(x)\Big)e_{kti}(y)\\
&= \frac{1}{\qbinom{n}{t}_q\qbinom{t}{s}_q} 
\sum_{i=1}^{h_k}\sum_{\dim(y)=t} \psi_{s,t}(e_{ksi})(y) e_{kti}(y).\\
\end{align*}
With the relation $\psi_{st}\circ\psi_{ks}=\qbinom{t-k}{s-k}_q\psi_{kt}$,
\begin{equation*}
\psi_{st}(e_{ksi})=\psi_{st}\circ
\psi_{ks}(e_{kki})=\qbinom{t-k}{s-k}_q\psi_{kt}(e_{kki})
=\qbinom{t-k}{s-k}_qe_{kti},
\end{equation*}
and we obtain 
\begin{align*}
P_{kst}(0) &= 
\frac{1}{\qbinom{n}{t}_q\qbinom{t}{s}_q} 
\sum_{i=1}^{h_k}\sum_{\dim(y)=t} \qbinom{t-k}{s-k}_qe_{kti}(y) e_{kti}(y)\\
&= 
\frac{\qbinom{t-k}{s-k}_q}{\qbinom{n}{t}_q\qbinom{t}{s}_q}  
\sum_{i=1}^{h_k} |X|(e_{kti}, e_{kti} )
=
|X|h_k\frac{\qbinom{t-k}{s-k}_q\qbinom{n-2k}{t-k}_q}{\qbinom{n}{t}_q\qbinom{t}{s}_q}q^{k(t-k)}.
\end{align*}
\end{proof}

\begin{Lemma} 
With the above notation,
\begin{equation}\label{e4}
\sum_{i=0}^s w(n,s,t;i)P_{kst}(i)P_{{\ell}st}(i)= \delta_{k,{\ell}}|X|^2h_k
\frac{\qbinom{n-2k}{s-k}_q
\qbinom{n-2k}{t-k}_qq^{k(s+t-2k)}}{\qbinom{n}{s}_q}.
\end{equation}
\end{Lemma}

\begin{proof}
  We compute $\Sigma:=\sum_{y\in X} E_{kst}(x,y) E_{{\ell},s',t'}(y,z)$.
\begin{align*}
\Sigma &= 
\sum_{y\in X} \sum_{i=1}^{h_k}\sum_{j=1}^{h_{\ell}}
e_{ksi}(x)e_{kti}(y)e_{{\ell}s'j}(y)e_{{\ell}t'j}(z)\\
&= \sum_{i=1}^{h_k}\sum_{j=1}^{h_{\ell}}
e_{ksi}(x)e_{{\ell}t'j}(z)\Big(\sum_{y\in
  X}e_{kti}(y)e_{{\ell}s'j}(y)\Big)\\
&= \sum_{i=1}^{h_k}\sum_{j=1}^{h_{\ell}}
e_{ksi}(x)e_{{\ell}t'j}(z)|X|(e_{kti},e_{{\ell}s'j})\\
&= \sum_{i=1}^{h_k}\sum_{j=1}^{h_{\ell}}
e_{ksi}(x)e_{{\ell}t'j}(z)|X|\delta_{k{\ell}}\delta_{ts'}\delta_{ij}\qbinom{n-2k}{t-k}_qq^{k(t-k)}\\
&=
\delta_{k{\ell}}\delta_{ts'}|X|\qbinom{n-2k}{t-k}_qq^{k(t-k)}\sum_{i=1}^{h_k}
e_{ksi}(x)e_{kt'i}(z)\\
&= \delta_{k{\ell}}\delta_{ts'}|X|\qbinom{n-2k}{t-k}_qq^{k(t-k)}E_{kst'}(x,z).
\end{align*}
We obtain, with $t=s'$, $t'=s$, $x=z\in X_s$, and taking 
$E_{{\ell}ts}(y,x)=E_{{\ell}st}(x,y)$ into account, 
\begin{equation*}
\sum_{y\in X_t} E_{kst}(x,y)E_{{\ell}st}(x,y)
=\delta_{k{\ell}}|X|\qbinom{n-2k}{t-k}_qq^{k(t-k)} E_{kss}(x,x).
\end{equation*}
The above identity becomes in terms of $P_{kst}$
\begin{equation*}
\sum_{y\in X_t} P_{kst}(s-\dim(x\cap y))P_{{\ell}st}(s-\dim(x\cap y))
=\delta_{k{\ell}}|X|\qbinom{n-2k}{t-k}_qq^{k(t-k)} P_{kss}(0).
\end{equation*}
Now we obtain~\eqref{e4} by \eqref{e3} and Lemma \ref{weights}.
\end{proof}

We showed that the functions $P_{kst}$ satisfy the same orthogonality
relations as the $q$-Hahn polynomials. So we are done if 
$P_{kst}$ is a polynomial of degree at most $k$ in the
variable $[u]=[\dim(x\cap y)]$. This property is proved in the case
$s=t$ in \cite[Theorem 5]{Delsarte-Hahn} and extends to $s\leq t$ with
a similar line of reasoning. The multiplicative factor between $P_{kst}(u)$ and
$Q_k(n,s,t;u)$ is then given by $P_{kst}(0)$ and the proof of
Theorem \ref{Ekst} is completed.
\end{proof}

\subsection{Symmetry reduction of the semidefinite program \eqref{SDP-primal} for
  projective codes}

Clearly, \eqref{SDP-primal} is $G$-invariant: this means that for
every feasible solution $F$ and for every $g \in G$, also $gF$ is
feasible with the same objective value.  Hence, we can average every
feasible solution over $G$. In particular, the optimal value
of~\eqref{SDP-primal} is attained by a function $F$ which is
$G$-invariant and so we can restrict the optimization variable
in~\eqref{SDP-primal} to be a $G$-invariant function.

A function $F(x,y) \in \mathbb{R}^{X\times X}$ is $G$-invariant if it
depends only on $\dim(x)$, $\dim(y)$, and $\dim(x\cap y)$. So
we introduce $\tilde{F}$, such that $F(x,y)=\tilde{F}(s,t,i)$ for
$x,y \in X$ with $\dim(x)=s,\dim(y)=t,\dim(x\cap y)=i$. Let
\begin{equation*}
N_{sti} :=|\{(x,y)\in X\times X : \dim(x)=s,\dim(y)=t,\dim(x\cap
y)=i\}|
\end{equation*}
and
\begin{align}\label{Omega}
\Omega(d):=\{(s,t,i) :\  &0\leq s,t\leq n,\  0\leq i\leq \min(s,t), \ s+t\leq n+i,\\\nonumber
&\text{either } s=t=i\text{ or }s+t-2i \geq d\}.
\end{align}
Then, \eqref{SDP-primal} becomes:
\begin{equation*}
\begin{array}{llll}
A_q(n,d) \leq \max\Big\{\displaystyle \sum_{s,t,i}N_{sti}\tilde{F}(s,t,i) & : & \tilde{F}\in \mathbb{R}^{[n]^3},\  \tilde{F} \succeq 0,\ \tilde{F}\geq 0,\\[-0.6ex]
&& \displaystyle \sum_{s=0}^nN_{sss}\tilde{F}(s,s,s)=1,\\
&& \displaystyle \tilde{F}(s,t,i) =0 \text{ if } (s,t,i)\notin
\Omega(d)\ \Big\},
\end{array}
\end{equation*}
where $\tilde{F} \succeq 0$ means that the corresponding $F$ is
positive semidefinite.

Then, we introduce the variables
$x_{sti}:=N_{sti}\tilde{F}(s,t,i)$. It is straightforward to rewrite
the program in terms of these variables, except for the condition
$\tilde{F}\succeq 0$.  From Theorem~\ref{blockdiag}, this is equivalent to the
semidefinite conditions $F_k \succeq 0$, where the matrices $F_k$ are
given by the scalar product of $F$ and $E_k$:
\begin{align*}
(F_k)_{st}=\frac{1}{|X|^2h_k\qbinom{n-2k}{s-k}_q  q^{k(s-k)}\qbinom{n-2k}{t-k}_q q^{k(t-k)}}
\sum_{(x,y)\in  X^2}F(x,y)E_{kst}(x,y)\\
=\frac{1}{|X|^2h_k\qbinom{n-2k}{s-k}_q  q^{k(s-k)}\qbinom{n-2k}{t-k}_q
  q^{k(t-k)}}
\sum_{u,v,i}x_{uvi}\tilde{E}_{kst}(u,v,i)
\end{align*}
We can substitute the value of $\tilde{E}_{kst}(u,v,i)$ using
Theorem~\ref{Ekst}; in particular it vanishes  when $(u,v)\neq(s,t)$,
and, when $(u,v)=(s,t)$ and $s\leq t$:
\begin{equation}\label{Fk}
(F_k)_{st}=\frac{1}{|X|}\sum_{i}x_{sti}\frac{\qbinom{t-k}{s-k}_q}{\qbinom{n}{t}_q\qbinom{t}{s}_q\qbinom{n-2k}{s-k}_q}q^{-k(s-k)}Q_k(n,s,t; s-i).\end{equation}

\begin{Thm}\label{SDP-final}
\begin{equation*}
\begin{array}{llll}
A_q(n,d) \leq \max\Big\{ \displaystyle \sum_{(s,t,i)\in
  \Omega(d)}x_{sti} & : & (x_{sti})_{(s,t,i)\in \Omega(d)},\ x_{sti}\geq 0,\\[-0.6ex]
&& \displaystyle\sum_{s=0}^nx_{sss}=1,\\
&& F_k \succeq 0\ \text{ for all } k=0,\dots, \lfloor n/2 \rfloor\Big\},
\end{array}
\end{equation*}
where $\Omega(d)$ is defined in \eqref{Omega} and the matrices $F_k$ are given in (\ref{Fk}).
\end{Thm}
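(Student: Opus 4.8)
\emph{Proof plan.} The plan is to obtain this reformulation in four stages: reduce \eqref{SDP-primal} to its $G$-invariant restriction, parametrize invariant feasible solutions by functions of the three orbit parameters, change variables to the $x_{sti}$, and finally translate positive semidefiniteness of $F$ into the block conditions $F_k\succeq 0$.

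First I would apply Corollary~\ref{SDP-cor} with $X=\mathcal{P}(\mathbb{F}_q^n)$ and the subspace distance, so that $A_q(n,d)$ is at most the optimal value of \eqref{SDP-primal}. Since the objective and all constraints of \eqref{SDP-primal} are left unchanged under the substitution $F\mapsto gF$ for $g\in G=GL_n(\mathbb{F}_q)$, averaging any feasible $F$ over $G$ yields a $G$-invariant feasible solution with the same objective value; hence the optimum is attained on a $G$-invariant $F$ and we may optimize over such $F$ only. A $G$-invariant $F$ depends only on $(\dim x,\dim y,\dim(x\cap y))$, so write $F(x,y)=\tilde F(s,t,i)$. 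Then $\sum_{(x,y)}F(x,y)=\sum_{s,t,i}N_{sti}\tilde F(s,t,i)$; since $\dim x=\dim y=\dim(x\cap y)$ forces $x=y$, the normalization $\sum_x F(x,x)=1$ becomes $\sum_s N_{sss}\tilde F(s,s,s)=1$; and since $d_S(x,y)=s+t-2i$ vanishes precisely when $x=y$, the condition ``$F(x,y)=0$ if $0<d_S(x,y)<d$'' becomes $\tilde F(s,t,i)=0$ for $(s,t,i)\notin\Omega(d)$, the remaining inequalities defining $\Omega(d)$ in \eqref{Omega} merely recording which triples are realized by some pair in $X\times X$.

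Second, substitute $x_{sti}=N_{sti}\tilde F(s,t,i)$; on realizable triples $N_{sti}>0$, so this is an invertible change of variables under which the objective becomes $\sum_{(s,t,i)\in\Omega(d)}x_{sti}$ and the normalization becomes $\sum_{s=0}^n x_{sss}=1$, the support constraint being unchanged. It remains to rephrase $F\succeq 0$. By Theorem~\ref{blockdiag} (in the unnormalized-basis form discussed before its statement), a $G$-invariant $F$ is positive semidefinite iff $F(x,y)=\sum_{k=0}^{\lfloor n/2\rfloor}\langle F_k,E_k(x,y)\rangle$ with each $F_k\succeq 0$, and the map $F\mapsto(F_k)_k$ is a bijection between $G$-invariant symmetric functions and tuples of symmetric matrices; thus I can recover $F_k$ from $F$ by taking suitably normalized inner products against the $E_k$. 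The orthogonality needed is exactly the computation in the proof of Theorem~\ref{Ekst}: specializing the identity for $\sum_{y}E_{kst}(x,y)E_{\ell s't'}(y,z)$ and summing over $x$ shows that $\sum_{(x,y)\in X^2}E_{kst}(x,y)E_{\ell s't'}(x,y)$ is a nonzero multiple of $\delta_{k\ell}\delta_{ss'}\delta_{tt'}$, which lets me solve for $(F_k)_{st}$ as the displayed normalized sum $\tfrac{1}{|X|^2 h_k\,\qbinom{n-2k}{s-k}_q q^{k(s-k)}\qbinom{n-2k}{t-k}_q q^{k(t-k)}}\sum_{(x,y)}F(x,y)E_{kst}(x,y)$.

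Finally, since $E_{kst}(x,y)=0$ unless $(\dim x,\dim y)=(s,t)$, only the terms $x_{sti}$ survive in that sum, and substituting the explicit value of $E_{kst}$ from Theorem~\ref{Ekst} (with $\dim(x\cap y)=i$, so the $q$-Hahn argument is $s-i$) collapses the expression to \eqref{Fk}; assembling the objective, the two linear constraints and the conditions $F_k\succeq 0$ gives the stated program. The main work is bookkeeping rather than conceptual: one must track the normalization constants carefully through the inversion $F\leftrightarrow(F_k)_k$ — using that the norms \eqref{inner product} depend only on $k$ and $s$, so the unnormalized zonal matrices still diagonalize positivity — and handle the case $s>t$ via the symmetries $E_{kst}(x,y)=E_{kts}(y,x)$ and $(F_k)_{st}=(F_k)_{ts}$, since $Q_k(n,s,t;\cdot)$ is defined only for $s\le t$.
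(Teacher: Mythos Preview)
Your proposal is correct and follows essentially the same route as the paper: the derivation in Section~\ref{sec 4}.2 proceeds exactly by restricting \eqref{SDP-primal} to $G$-invariant $F$, passing to the orbit parametrization $\tilde F(s,t,i)$, substituting $x_{sti}=N_{sti}\tilde F(s,t,i)$, and then invoking Theorem~\ref{blockdiag} together with Theorem~\ref{Ekst} to rewrite $F\succeq 0$ as the block conditions \eqref{Fk}. You are, if anything, a bit more explicit than the paper about why the inversion $F\leftrightarrow(F_k)_k$ is legitimate (via the orthogonality computed in the proof of Theorem~\ref{Ekst}) and about the $s>t$ symmetry, but these are exactly the bookkeeping points the paper leaves implicit.
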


\begin{Rem}\label{rem} A projective code $\mathcal C$ with minimal distance
  $d$ provides a feasible solution of the above program, given by:
\begin{equation*}
x_{sti}=\frac{1}{|\mathcal C|} |\{(x,y)\in \mathcal{C} :
\dim(x)=s,\dim(y)=t,\dim(x\cap y)=i\}.
\end{equation*}
In particular, we have
\begin{equation*}
\sum_{t,i} x_{sti}= |\mathcal{C}\cap \mathcal{G}_q(n,s)|,
\end{equation*}
so, we can add the valid inequality
\begin{equation*}
\sum_{t,i} x_{sti} \leq A_q(n,s,2\lceil d/2\rceil)
\end{equation*}
to the semidefinite program of Theorem \ref{SDP-final} in order to tighten it.

Following the same line of reasoning, we could also add the linear inequalities
\begin{equation*}
\sum_{s=0}^n c(s,k,e) \big(\sum_{t,i} x_{sti}\big) \leq \qbinom{n}{k}_q, \ k=0,\dots,n
\end{equation*}
where $e=\lfloor (d-1)/2\rfloor$, to the semidefinite program of Theorem \ref{SDP-final}, so that the resulting semidefinite program
contains all the constraints of the  linear program of Theorem \ref{EV}. It turns out that  
this semidefinite program behaves numerically badly, and that, when it can be computed, its optimal value 
is equal to the minimum of the optimal values of the initial semidefinite program and of the linear program.
\end{Rem}

\section{Numerical results}
\label{sec 5}

In this section we report the numerical results obtained for the binary case $q =
2$. Table 1 contains upper bounds for $A_2(n,d)$ for the subspace
distance $d_S$ while Table 2 contains upper bounds for
$A^{inj}_2(n,d)$ for the injection distance $d_i$ recently introduced
in~\cite{network-metrics}.

\subsection{Subspace distance}

The first column of Table 1 displays the upper bound obtained from
Etzion-Vardy's linear program, Theorem~\ref{EV}. Observing that the variables $x_k$ in this program represent integers,
its optimal value as an integer program gives an upper bound for $A_q(n,2e+1)$ that may improve on the optimal value 
of the linear program in real variables. However, we observed a difference with 
the optimal value of the linear program in real variables of at most $1$. In Table 1, we display the bound obtained with the optimal value of the linear program in real variables, and indicate with a superscript $*$ the cases when the integer program 
gives a better bound (of one less).

The second column contains the upper bound from the semidefinite
program of Theorem \ref{SDP-final}, strengthened by the inequalities
(see Remark \ref{rem}):
\begin{equation*}
\sum_{t,i} x_{sti} \leq A_2(n,s,2\lceil d/2\rceil)\mbox{\ \ for all }s=0,\dots, n.
\end{equation*}

In both programs, $A_2(n,k,2\delta)$ was replaced by the upper bound from Theorem \ref{Bound grassmann}.

\begin{table}[htbp]
\begin{center}
\begin{tabular}{r|r|r}
\textbf{parameter} & \textbf{E-V LP} & \textbf{SDP}\\
\hline
$A_2(4,3)$ & 6 &  6\\
$A_2(5,3)$ & 20 &  20\\
$A_2(6,3)$ & *124 &  124\\
$A_2(7,3)$ & 832 &  776\\
$A_2(7,5)$ & 36 &  35\\
$A_2(8,3)$ & 9365 &  9268\\
$A_2(8,5)$ & 361 &  360\\
$A_2(9,3)$ & *114387 & 107419\\
$A_2(9,5)$ & *2531 &  2485\\
$A_2(10,3)$ & *2543747 & 2532929\\
$A_2(10,5)$ & *49451 & 49394 \\
$A_2(10,7)$ & *1224 & 1223\\
$A_2(11,5)$ & 693240 & 660285\\
$A_2(11,7)$ & 9120 & 8990\\
$A_2(12,7)$ & 323475 & 323374\\
$A_2(12,9)$ & *4488 & 4487\\
$A_2(13,7)$ & 4781932 & 4691980\\
$A_2(13,9)$ & *34591 & 34306\\
$A_2(14,9)$ & 2334298 & 2334086\\
$A_2(14,11)$ & *17160 & 17159\\
$A_2(15,11)$ & *134687 & 134095\\
$A_2(16,13)$ & *67080 & 67079\\
\end{tabular}
\end{center}
\caption{Bounds for the subspace distance}
\end{table}

\subsection{Additional inequalities}

Etzion and Vardy~\cite{Etzion-Vardy} found additional
valid inequalities for their linear program in the special 
case of $n=5$ and $d=3$. 
With this, they could  improve their bound to the exact value
$A_2(5,3)=18$. In this section we establish analogous inequalities for
other parameters $(n,d)$. 

\begin{Thm}\label{Add} Let $\mathcal{C} \subset \mathcal{P}(\mathbb{F}_q^n)$,  of
  minimal subspace distance $d$,  and let $D_k:=|\mathcal{C}\cap
  \mathcal{G}_q(n,k)|$. Then,  if
 \begin{equation*}
d+2\left\lceil d/2 \right\rceil +2 < 2n < 2d+2\left\lceil d/2
\right\rceil +2,
\end{equation*} we have:
\begin{itemize}
\item $D_{2n-d-\left\lceil d/2 \right\rceil -1} \leq 1$;
\item if $D_{2n-d-\left\lceil d/2 \right\rceil -1} = 1$ then $$D_{\left\lceil d/2 \right\rceil} \leq \frac{q^n-q^{2n-d-\left\lceil d/2 \right\rceil -1}}{q^{\left\lceil d/2 \right\rceil}-q^{n-d-1}}.$$
\end{itemize}
\end{Thm}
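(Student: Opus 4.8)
The plan is to reduce both parts to elementary dimension counting over $\mathbb{F}_q$, translating the distance condition $d_S(U,V)=\dim U+\dim V-2\dim(U\cap V)\geq d$ into bounds on $\dim(U\cap V)$, and then — in the second part — using the unique codeword $W$ of dimension $k_0:=2n-d-\lceil d/2\rceil-1$ to remove ``forbidden'' one-dimensional subspaces. Before doing anything I would unpack the two-sided hypothesis into what is actually needed. Writing $e:=\lceil d/2\rceil$ and $m:=n-d-1$: from $2n>d+2e+2$ together with $e\geq d/2$ one gets $n>d+1$, hence $m\geq 1$ and $k_0-(n-e)=n-d-1=m>0$; from $2n<2d+2e+2$ one gets $n<d+e+1$, hence $e-m=e+d+1-n>0$ (so $q^e-q^m>0$) and $k_0<n$. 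I would also record the two standard translations of the distance condition: if $\dim U=\dim V$, then $d_S(U,V)$ is even, so $d_S(U,V)\geq d$ is the same as $d_S(U,V)\geq 2e$; and if $\dim U=e$, $\dim V=k_0$, then $d_S(U,V)\geq d$ forces $\dim(U\cap V)\leq\frac12(e+k_0-d)=n-d-\frac12$, hence $\dim(U\cap V)\leq m$ as it is an integer.

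For the first bullet I would argue by contradiction: if $U,V\in\mathcal{C}\cap\mathcal{G}_q(n,k_0)$ are distinct, then $\dim(U\cap V)\leq k_0-e$ by the first translation, so $\dim(U+V)=2k_0-\dim(U\cap V)\geq k_0+e$; since $\dim(U+V)\leq n$ this gives $k_0\leq n-e$, contradicting $k_0>n-e$. Hence $D_{k_0}\leq 1$.

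For the second bullet, assume $D_{k_0}=1$, let $W$ be the unique codeword of dimension $k_0$, and let $U_1,\dots,U_{D_e}$ be the codewords of dimension $e$ (note $\dim W=k_0\geq e+2>e$, so $W$ is none of the $U_i$). The first translation applied to the equal-dimensional pairs $U_i,U_j$ gives $U_i\cap U_j=\{0\}$ for $i\neq j$, and the second gives $\dim(U_i\cap W)\leq m$ for all $i$. I would then double-count the pairs $(P,U_i)$ with $P$ a one-dimensional subspace of $\mathbb{F}_q^n$ satisfying $P\subseteq U_i$ and $P\not\subseteq W$. For a fixed $U_i$ the number of such $P$ equals $\frac{q^e-1}{q-1}-\frac{q^{\dim(U_i\cap W)}-1}{q-1}\geq\frac{q^e-q^m}{q-1}$, so the total count is at least $D_e\cdot\frac{q^e-q^m}{q-1}$; since the $U_i$ are pairwise disjoint, every one-dimensional $P$ sits in at most one $U_i$, so the total count is at most the number of one-dimensional subspaces not contained in $W$, namely $\frac{q^n-1}{q-1}-\frac{q^{k_0}-1}{q-1}=\frac{q^n-q^{k_0}}{q-1}$. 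Dividing by the positive quantity $\frac{q^e-q^m}{q-1}$ yields $D_e\leq\frac{q^n-q^{k_0}}{q^e-q^m}=\frac{q^n-q^{2n-d-e-1}}{q^e-q^{n-d-1}}$, which is the asserted bound (the case $D_e=0$ is automatic, the right-hand side being positive because $n<d+e+1$).

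The only delicate point is the bookkeeping with the ceiling and with parity: one must check that $2n>d+2\lceil d/2\rceil+2$ genuinely forces $n>d+1$, which is exactly what makes $k_0>n-e$ and $m<e$ hold, and that $d_S\geq d$ between two equal-dimensional codewords may be upgraded to $d_S\geq 2\lceil d/2\rceil$. Once these are pinned down, both parts are routine counting of one-dimensional subspaces inside subspaces of $\mathbb{F}_q^n$.
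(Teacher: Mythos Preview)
Your proof is correct and follows essentially the same route as the paper: for the first bullet both argue by contradiction from $\dim(U+V)\le n$ together with the distance constraint, and for the second bullet both count what lies outside $W$ after observing that distinct codewords in $\mathcal{G}_q(n,\lceil d/2\rceil)$ meet only at $0$. The only cosmetic differences are that the paper counts vectors in $\bigcup_i U_i$ rather than one-dimensional subspaces (a harmless rescaling by $q-1$), and that the paper pins down $\dim(U_i\cap W)=n-d-1$ exactly via the additional constraint $\dim(U_i+W)\le n$, whereas you use only the upper bound $\dim(U_i\cap W)\le n-d-1$, which already suffices for the inequality.
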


\begin{proof}
It is clear that $D_i \leq 1$ for  $0 \leq i < \left\lceil d/2 \right\rceil$.
Moreover, for all $x,y\in \mathcal{C}\cap\mathcal{G}(n,\left\lceil d/2
\right\rceil)$, $x\neq y$,  $\dim(x\cap y)=0$.
We want to show that $D_{2n-d-\left\lceil d/2 \right\rceil -1} \leq 1$.
Indeed assume by contradiction $x \neq y \in
\mathcal{C}\cap\mathcal{G}(n,2n-d-\left\lceil d/2 \right\rceil -1)$,
we have
$$ \left\{ \begin{array}{l}  4n-2d-2\left\lceil d/2 \right\rceil -2 \leq n+\dim(x\cap y) \\
d \leq 4n-2d-2\left\lceil d/2 \right\rceil-2-2\dim(x\cap y)
\end{array} \right.$$
leading to
$$ \left\{ \begin{array}{ll} 2\dim(x\cap y) \geq 6n-4d-4\left\lceil d/2 \right\rceil-4 & (*)\\
2\dim(x\cap y) \leq 4n-3d-2\left\lceil d/2 \right\rceil-2 & (**)
\end{array} \right.$$
To obtain a contradiction, we must have $(*) > (**)$ which is
equivalent to the hypothesis  $2n>d+2\left\lceil d/2 \right\rceil +2$.

With a similar reasoning, we prove that, for all $x \in \mathcal{C}\cap\mathcal{G}(n,\left\lceil d/2 \right\rceil)$ and $w \in \mathcal{C}\cap\mathcal{G}(n,2n-d-\left\lceil d/2 \right\rceil -1)$, $\dim (x\cap w)=n-d-1$.
Indeed, 
$$\left\{\begin{array}{l} 2n-d-1 \leq n + \dim (x\cap w) \\
d \leq 2n-d-1-2\dim (x\cap w) 
\end{array}
\right.$$
so
$$\left\{\begin{array}{l} \dim (x\cap w) \geq n-d-1\\
\dim (x\cap w) \leq n-d-1/2
\end{array}
\right.$$
which yields the result.

Now we assume $D_{2n-d-\left\lceil d/2 \right\rceil -1} = 1$. Let  $w
\in \mathcal{C}\cap\mathcal{G}(n,2n-d-\left\lceil d/2 \right\rceil
-1)$. Let $\mathcal{U}$ denote the union of the subspaces $x$
belonging to $\mathcal{C}\cap\mathcal{G}(n,\left\lceil d/2
\right\rceil)$. 
We have $|\mathcal{U}|=1 + D_{\left\lceil d/2
  \right\rceil}(q^{\left\lceil d/2 \right\rceil}-1)$ and 
$|\mathcal{U}\cap w |=1 + D_{\left\lceil d/2 \right\rceil}(q^{n-d-1}-1)$.
On the other hand,  $|\mathcal{U}\backslash (\mathcal{U}\cap w))| \leq
|\mathbb{F}_q^n\backslash w|$, leading to
$$D_{\left\lceil d/2 \right\rceil}(q^{\left\lceil d/2 \right\rceil}-q^{n-d-1}) \leq q^n - q^{2n-d-\left\lceil d/2 \right\rceil -1}\ .$$
\end{proof}

In several cases, adding these inequalities led to a lower optimal
value, however we found that only in one case other than
$(n,d)=(5,3)$, the final result, after rounding down to an integer, is
improved. It is the case $(n,d)=(7,5)$, where $D_3 \leq 17$ and, by
Theorem \ref{Add}, if $D_5=1$ then $D_3\leq 16$. So we can add
$D_3+D_5\leq 17$ and $D_2+D_4\leq 17$, leading to: $A_2(7,5) \leq 34$.
This bound can be obtained with both the linear program of Theorem
\ref{EV} and the semidefinite program of Theorem \ref{SDP-final}.

\subsection{Injection distance}

Recently, a new metric has been considered in the framework of
projective codes, the \emph{injection} metric, introduced in
\cite{network-metrics}. The \emph{injection distance} between two
subspaces $U,V \in \mathcal{P}(\mathbb{F}_q^n)$ is defined
by
$$
d_i(U,V)=\max\{\dim(U),\dim(V)\}-\dim(U\cap V).
$$ 
When restricted to the Grassmann space, i.e.\ when $U,V$ have the same dimension, the
new distance coincides with the subspace distance (up to
multiplication by $2$). In general we have the relation
(see \cite{network-metrics})
$$
d_i(U,V)=\frac{1}{2}d_S(U,V) +
\frac{1}{2}|\dim(U)-\dim(V)|,
$$
where $d_S$ denotes the subspace distance.

It is straightforward to modify the programs in order to produce
bounds for codes on this new metric space
$(\mathcal{P}(\mathbb{F}_q^n), d_i)$. Let
$$
A_q^{inj}(n,d)=\max\{|\mathcal{C}|\ :\ \mathcal{C}\subset
\mathcal{P}(\mathbb{F}_q^n),\ d_i(\mathcal{C})\geq d\}.
$$
For constant dimension codes, we have $A_q^{inj}(n,k,d)=A_q(n,k,2d).$

To modify the linear program of Etzion and Vardy for this new
distance, we need to write down packing-constraints.
The cardinality of balls in $\mathcal{P}(\mathbb{F}_q^n)$ for the
injection distance can be found in~\cite{inj-codes}.  Let
$B^{inj}(V,e)$ be the ball with center $V$ and radius $e$.  If
$\dim(V)=i$, we have
\begin{align*}
|B^{inj}(V,e)|= &\sum_{r=0}^e q^{r^2}\qbinom{i}{r}_q\qbinom{n-i}{r}_q\\
&+\sum_{r=0}^e\sum_{\alpha=1}^r q^{r(r-\alpha)}\left(\qbinom{i}{r}_q\qbinom{n-i}{r-\alpha}_q +
  \qbinom{i}{r-\alpha}_q\qbinom{n-i}{r}_q\right).
\end{align*}
We define $c^{inj}(i,k,e):=|B^{inj}(V,e)\cap \mathcal{G}_q(n,k)|$
where $\dim(V)=i$. We set $\alpha:=|i-k|$.
$$
c^{inj}(i,k,e)=\left\{\begin{array}{ll}
\sum_{r=0}^eq^{r(r-\alpha)}\qbinom{i}{r}_q\qbinom{n-i}{r-\alpha}_q & \mbox{ if }i \geq k\\
\sum_{r=0}^eq^{r(r-\alpha)}\qbinom{i}{r-\alpha}_q\qbinom{n-i}{r}_q & \mbox{ if }i \leq k\\
\end{array}\right.$$

\begin{Thm}[Linear programming bound for codes in $\mathcal{P}(\mathbb{F}_q^n)$ with injection distance]
$$\begin{array}{ccl} 
A_q^{inj}(n,d) \leq \max \Big\{ \sum_{k=0}^nx_k & : & x_k\leq A_q^{inj}(n,k,d)\ \forall\ k=0,\dots,n\\
\ &\ & \sum_{i=0}^nc^{inj}(i,k,e)x_{i}\leq \qbinom{n}{k}_q\ \forall\
k=0,\dots,n \Big\}
\end{array} 
$$
\end{Thm}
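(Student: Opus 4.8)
The plan is to follow verbatim the argument proving Theorem~\ref{EV}, replacing the subspace distance by the injection distance throughout. Fix a code $\mathcal{C}\subset\mathcal{P}(\F_q^n)$ with $d_i(\mathcal{C})\geq d$ and set $e=\lfloor(d-1)/2\rfloor$. For $k=0,\dots,n$ put $x_k=|\mathcal{C}\cap\mathcal{G}_q(n,k)|$, so that $\sum_{k=0}^n x_k=|\mathcal{C}|$. Each set $\mathcal{C}\cap\mathcal{G}_q(n,k)$ is a constant-dimension code whose injection distance is at least $d$, so $x_k\leq A_q^{inj}(n,k,d)$, which gives the first family of constraints.

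For the second family, the key observation is that $d_i$ is a metric (the triangle inequality is established in \cite{network-metrics}), so the balls $B^{inj}(V,e)$ for $V\in\mathcal{C}$ are pairwise disjoint: if some $W$ lay in two of them, the corresponding centers would be at injection distance at most $2e\leq d-1<d$, a contradiction. Intersecting with a fixed Grassmannian, the sets $B^{inj}(V,e)\cap\mathcal{G}_q(n,k)$ are therefore pairwise disjoint subsets of $\mathcal{G}_q(n,k)$, whence $\sum_{V\in\mathcal{C}}|B^{inj}(V,e)\cap\mathcal{G}_q(n,k)|\leq\qbinom{n}{k}_q$. Since $|B^{inj}(V,e)\cap\mathcal{G}_q(n,k)|$ depends only on $\dim V$ and equals $c^{inj}(\dim V,k,e)$, grouping the codewords of $\mathcal{C}$ by dimension yields $\sum_{i=0}^n c^{inj}(i,k,e)\,x_i\leq\qbinom{n}{k}_q$. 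Hence $(x_0,\dots,x_n)$ is a feasible point of the linear program and $|\mathcal{C}|=\sum_k x_k$ is bounded by its optimal value, as in Theorem~\ref{EV}; and, as in the Remark following Theorem~\ref{EV}, the values $A_q^{inj}(n,k,d)=A_q(n,k,2d)$ may be replaced by any available upper bound for explicit computations.

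The only point that is not a direct transcription of the Etzion--Vardy argument is the closed form for $c^{inj}(i,k,e)$ recorded just before the statement, and I would expect this to be the main (though still routine) step. It is a standard point count: for $V$ of dimension $i$, the number of $y\in\mathcal{G}_q(n,k)$ with $\dim(V\cap y)=j$ is $\qbinom{i}{j}_q\qbinom{n-i}{k-j}_q q^{(i-j)(k-j)}$ (the same count underlying Lemma~\ref{weights}), and the condition $d_i(V,y)=\max(i,k)-j\leq e$ simply restricts the admissible range of $j$. Substituting $r=i-j$ when $i\geq k$ and $r=k-j$ when $i\leq k$, and writing $\alpha=|i-k|$, reorganizes the sum into the two displayed expressions; one also checks in passing that these agree with the restriction to $\mathcal{G}_q(n,k)$ of the ball-size formula of \cite{inj-codes}. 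Beyond this bookkeeping, the proof is the same packing-and-partition argument used for the subspace distance.
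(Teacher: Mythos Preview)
Your proof is correct and is precisely the approach the paper intends: the paper does not give a separate proof but states that ``it is straightforward to modify the programs'' for the injection metric, meaning one repeats the packing-and-partition argument of Theorem~\ref{EV} with $d_i$ in place of $d_S$ and $c^{inj}$ in place of $c$, exactly as you do. Your choice $e=\lfloor (d-1)/2\rfloor$ and your derivation of the closed form for $c^{inj}(i,k,e)$ also match what the paper records (without proof) just before the statement.
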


For the semidefinite programming bound, we only need to change the
definition of $\Omega(d)$ by
\begin{align}\label{Omega inj}
\Omega^{inj}(d):=\{(s,t,i) :\  &0\leq s,t\leq n,\  i\leq \min(s,t), \ s+t\leq n+i,\\\nonumber
&\text{either } s=t=i\text{ or }\max(s,t)-i \geq d\}.
\end{align}

\begin{Thm}
\label{SDPinj-final}
\begin{equation*}
\begin{array}{llll}
A_q^{inj}(n,d) \leq \max\Big\{ \displaystyle \sum_{(s,t,i)\in
  \Omega^{inj}(d)}x_{sti} & : & (x_{sti})_{(s,t,i)\in \Omega^{inj}(d)},\ x_{sti}\geq 0,\\[-0.4cm]
&& \displaystyle\sum_{s=0}^nx_{sss}=1,\\
&& F_k \succeq 0\ \text{ for all } k=0,\dots, \lfloor n/2 \rfloor\Big\}
\end{array}
\end{equation*}
where $\Omega^{inj}(d)$ is defined in \eqref{Omega inj} and the matrices $F_k$ are given in (\ref{Fk}).
\end{Thm}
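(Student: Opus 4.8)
The plan is to mirror exactly the symmetry-reduction argument carried out in Section~\ref{sec 4} for the subspace distance, replacing only the combinatorial data that encode the distance constraint. First I would observe that the semidefinite program \eqref{SDP-primal} is $G$-invariant for the injection metric just as it is for the subspace metric: the group $G = GL_n(\F_q)$ acts on $(\mathcal{P}(\F_q^n), d_i)$ by isometries, since $d_i(gU,gV)$ depends only on $\dim(gU)=\dim(U)$, $\dim(gV)=\dim(V)$, and $\dim(gU\cap gV)=\dim(U\cap V)$. Hence we may average any feasible solution $F$ over $G$ and restrict to $G$-invariant $F$, which as before are precisely the functions $\tilde F(s,t,i)$ depending only on the orbit data $(\dim x,\dim y,\dim(x\cap y))=(s,t,i)$.

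Next I would trace through the reduction verbatim. The only place the distance enters the reduced program of Theorem~\ref{SDP-final} is in the support condition $\tilde F(s,t,i)=0$ for $(s,t,i)\notin\Omega(d)$, which records that $F(x,y)$ must vanish whenever $0<d(x,y)<d$. For the injection distance one has $d_i(U,V)=\max(\dim U,\dim V)-\dim(U\cap V)$, so the constraint $0<d_i(x,y)<d$ becomes: $(x,y)$ is not a diagonal-type pair (i.e.\ not $s=t=i$, which is the only way $d_i=0$) and $\max(s,t)-i<d$. Taking the complement, the admissible triples are exactly those with either $s=t=i$ or $\max(s,t)-i\ge d$, together with the standing combinatorial feasibility constraints $0\le s,t\le n$, $i\le\min(s,t)$, $s+t\le n+i$ (the last being the condition for a pair of subspaces of dimensions $s,t$ with $i$-dimensional intersection to exist inside $\F_q^n$). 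This is precisely the definition of $\Omega^{inj}(d)$ in \eqref{Omega inj}.

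Then I would note that nothing else changes: the positive-semidefiniteness characterization of Theorem~\ref{blockdiag}, the explicit zonal matrices $E_{kst}$ of Theorem~\ref{Ekst}, the substitution $x_{sti}=N_{sti}\tilde F(s,t,i)$, the normalization $\sum_s x_{sss}=1$, and the derived block matrices $(F_k)_{st}$ given by formula \eqref{Fk} are all independent of which metric is used --- they depend only on the $G$-action on $X\times X$, not on the edge set of the graph. Therefore replacing $\Omega(d)$ by $\Omega^{inj}(d)$ in the statement of Theorem~\ref{SDP-final} yields exactly Theorem~\ref{SDPinj-final}, since $A_q^{inj}(n,d)=\alpha(\mathcal G^{inj})$ for the graph $\mathcal G^{inj}$ on $X$ with edges $\{xy : 0<d_i(x,y)<d\}$, and Corollary~\ref{SDP-cor} applies to this graph.

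The only step requiring genuine (though routine) verification is the identification of the support condition, i.e.\ checking that $0<d_i(x,y)<d$ holds for a pair in orbit $(s,t,i)$ if and only if $(s,t,i)$ satisfies the feasibility constraints, is not of the form $s=t=i$, and has $\max(s,t)-i<d$. This is where one must be careful that $d_i(x,y)=0$ forces $U=V$ and hence $s=t=i$ (so that the ``either $s=t=i$'' clause correctly captures all the excluded zero-distance pairs, including off-diagonal equal-dimension pairs, which here is automatic since equal dimension plus equal intersection dimension means equality), and that for $s=t$ the injection distance is half the subspace distance so the constant-dimension subcodes behave as expected. I expect no real obstacle here; the argument is a direct transcription of the subspace-distance case with the edge relation swapped, and I would present it as such rather than redoing the symmetry reduction from scratch.
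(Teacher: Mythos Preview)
Your proposal is correct and follows precisely the paper's approach: the paper offers no separate proof for Theorem~\ref{SDPinj-final}, stating only that ``for the semidefinite programming bound, we only need to change the definition of $\Omega(d)$'' to $\Omega^{inj}(d)$, which is exactly the observation you justify in detail. Your expanded verification that $G$ acts by isometries for $d_i$ and that the support condition translates to $\max(s,t)-i\ge d$ is a faithful (and more explicit) rendering of what the paper leaves implicit.
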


Table 2 displays the numerical computations we obtained from the two
programs.

\begin{table}[htbp]
\begin{center}
\begin{tabular}{r|r|r}
\textbf{parameter} & \textbf{E-V LP}  & \textbf{SDP}\\
\hline
$A^{inj}_2(7,3)$ & 37 &  37 \\
$A^{inj}_2(8,3)$ & 362 &  364\\
$A^{inj}_2(9,3)$ & 2533 &  2536\\
$A^{inj}_2(10,3)$ & 49586 & 49588\\
$A^{inj}_2(10,4)$ & 1229 &  1228\\
$A^{inj}_2(11,4)$ & 9124 & 9126\\
$A^{inj}_2(12,4)$ & 323778 & 323780\\
$A^{inj}_2(12,5)$ & 4492 & 4492\\
$A^{inj}_2(13,5)$ & 34596 & 34600\\
$A^{inj}_2(14,6)$ & 17167 & 17164\\
$A^{inj}_2(15,6)$ & 134694 & 134698 \\
$A^{inj}_2(16,7)$ & 67087 & 67084\\
\end{tabular}
\end{center}

\caption{Bounds for the injection distance}
\end{table}

\begin{Rem} We observe that the bound obtained for $A_2^{inj}(n,2e+1)$  is
  most of the time slightly larger than the one obtained for
  $A_2(n,4e+1)$. In \cite{inj-codes}, the authors noticed that
  their constructions led to codes that are slightly better for the
  injection distance that for the subspace distance. So both
  experimental observations indicate that $A_2(n,4e+1)$ is larger than
  $A_2^{inj}(n,2e+1)$.
\end{Rem}

The computational part of this research would not have been possible
without the use of free software: We computed the values of the linear
programs with Avis' lrs 4.2 available from
http://cgm.cs.mcgill.ca/\~{}avis/C/lrs.html. The values of the
semidefinite programs we computed with SDPA or SDPT3, available from
the NEOS website http://www.neos-server.org/neos/.

\section*{Acknowledgements} We would like to thank the
first referee for  valuable comments and suggestions.

\medskip

{\it E-mail address:} christine.bachoc@math.u-bordeaux1.fr\\
\indent{\it E-mail address:} alberto.passuello@math.u-bordeaux1.fr \\
\indent{\it E-mail address:} frank.vallentin@uni-koeln.de

\end{document}